\newtheorem{theorem}{Theorem}
\newtheorem{lemma}{Lemma}
\newtheorem{constraint}{Constraint}
\newtheorem{problem}{Problem}
\newcommand{\remove}[1]{}
\begin{document}

\title{Identifying Correlated Heavy-Hitters in a Two-Dimensional Data Stream
\footnote{A preliminary version of the paper ``Identifying Correlated Heavy-Hitters on a Two-Dimensional Data Stream" was accepted at the Proceedings of the 28th IEEE International Performance Computing and Communications Conference (IPCCC) 2009. }
\footnote{The authors were supported in part by the National Science Foundation through grants NSF CNS-0834743 and CNS-0831903.}
}

\author{
Bibudh Lahiri\thanks{
Impetus Technologies, Los Gatos, CA 95032, USA.
Email: bibudhlahiri@gmail.com
}
\and 
Arko Provo Mukherjee\thanks{
Department of Electrical and Computer Engineering, Iowa State University.
Email: arko@iastate.edu
}
\and 
Srikanta Tirthapura\thanks{
Department of Electrical and Computer Engineering, Iowa State University.
Email: snt@iastate.edu
}
}

\maketitle

\begin{abstract}
  We consider online mining of correlated heavy-hitters from a data
  stream. Given a stream of two-dimensional data, a correlated
  aggregate query first extracts a substream by applying a predicate
  along a primary dimension, and then computes an aggregate along a
  secondary dimension. Prior work on identifying heavy-hitters in
  streams has almost exclusively focused on identifying heavy-hitters
  on a single dimensional stream, and these yield little insight into
  the properties of heavy-hitters along other dimensions. In typical
  applications however, an analyst is interested not only in
  identifying heavy-hitters, but also in understanding further
  properties such as: what other items appear frequently along with a
  heavy-hitter, or what is the frequency distribution of items that
  appear along with the heavy-hitters.

  We consider queries of the following form: ``In a stream $S$ of
  $(x,y)$ tuples, on the substream $H$ of all $x$ values that are
  heavy-hitters, maintain those $y$ values that occur frequently with
  the $x$ values in $H$''. We call this problem as Correlated
  Heavy-Hitters (CHH). We formulate an approximate formulation of CHH
  identification, and present an algorithm for tracking CHHs on a data
  stream. The algorithm is easy to implement and uses workspace which
  is orders of magnitude smaller than the stream itself. We present
  provable guarantees on the maximum error, as well as detailed
  experimental results that demonstrate the space-accuracy trade-off.

\end{abstract}

%---------------------
\section{Introduction}
\label{sec:introduction}
%---------------------
Correlated aggregates \cite{ADGKMS03,GKS01,CTX09} reveal interesting
interactions among different attributes of a multi-dimensional
dataset. They are useful in finding an aggregate on an 
attribute over a subset of the data, where the subset
is defined by a selection predicate on a different attribute of the
data. On stored data, a correlated aggregate can be computed
by considering one dimension at a time, using multiple passes through
the data. However, for dynamic streaming data, we often do not have
the luxury of making multiple passes over the data, and moreover, 
the data may be too large to store and it is desirable to 
have an algorithm that works
in a single pass through the data. Sometimes, even the substream
derived by applying the query predicate along the primary dimension
can be too large to store, let alone the whole dataset.

\remove{
We consider the identification of a) correlated heavy-hitters (CHHs)
and b) computation of the number of distinct items (also known as the
``$0^{th}$ frequency moment'' or $F_0$) from massive data streams.
}

We first define the notion of a heavy-hitter on a data stream (this is
considered in prior work, such as~\cite{MM02,MG82,CCF04,CM2005}), and
then define our notion of correlated heavy-hitters. Given a sequence
of single-dimensional records $(a_1,a_ 2,\ldots,a_N)$, where $a_i \in
\{1,\ldots,m\}$, the frequency of an item $i$ is defined as
$|\{a_j|a_j = i\}|$. Given a user-input threshold $\phi \in (0,1)$,
any data item $i$ whose frequency is at least ${\phi}N$ is termed as a
$\phi$-heavy-hitter. We first consider the following problem of exact
identification of CHHs.

%---------------
\begin{problem}
\label{prob:exact-CHH}
{\bf Exact Identification of Correlated Heavy Hitters.}
Given a data stream $S$ of
$(x, y)$ tuples of length $N$ ($x$ and $y$ will henceforth be
referred to as the ``primary'' and the ``secondary'' dimensions,
respectively), and two user-defined thresholds ${\phi}_1$ and 
${\phi}_2$, where $0 < {\phi}_1 < 1$ and $0 < {\phi}_2 < 1$, identify 
all $(d,s)$ tuples such that:

$$
f_d = |\{(x, y) \in S:(x = d)\}| > {{\phi}_1}N
$$ 
and
$$
f_{d,s} = |\{(x, y) \in S:(x = d) \land (y = s)\}| > {{\phi}_2}f_d
$$
\end{problem}
%---------------

The above aggregate can be understood as follows.  The elements $d$
are heavy-hitters in the traditional sense, on the stream formed by
projecting along the primary dimension. For each heavy-hitter $d$
along the primary dimension, there is logically a (uni-dimensional)
substream $S_d$, consisting of all values along the secondary
dimension, where the primary dimension equals $d$. We require the
tracking of all tuples $(d,s)$ such that $s$ is a heavy-hitter in
$S_d$.

Many stream mining and monitoring problems on two-dimensional streams
need the CHH aggregate, and cannot be answered by independent
aggregation along single dimensions. For example, consider a network
monitoring application, where a stream of (destination IP address,
source IP address) pairs is being observed. The network monitor maybe
interested not only in tracking those destination IP addresses that
receive a large fraction of traffic (heavy-hitter destinations), but
also in tracking those source IP addresses that send a large volume of
traffic to these heavy-hitter destinations. This cannot be done by
independently tracking heavy-hitters along the primary and the
secondary dimensions. Note that in this application, we are interested
not only in the identity of the heavy-hitters, but also additional
information on the substream induced by the heavy-hitters.

In another example, in a stream of (server IP address, port number)
tuples, identifying the heavy-hitter server IP addresses will tell us
which servers are popular, and identifying frequent port numbers
(independently) will tell us which applications are popular; but a
network manager maybe interested in knowing which applications are
popular among the heavily loaded servers, which can be retrieved using
a CHH query. Such correlation queries are used for network
optimization and anomaly detection \cite{Cullingford09}.

Another application is the recommendation system of a typical online
shopping site, which shows a buyer a list of the items frequently
bought with the ones she has decided to buy. Our algorithm can
optimize the performance of such a system by parsing the transaction
logs and identifying the items that were bought commonly with the
frequently purchased items. If such information is stored in a cache
with a small lookup time, then for most buyers, the recommendation
system can save the time to perform a query on the disk-resident data.

Similar to the above examples, in many stream monitoring applications,
it is important to track the heavy-hitters in the stream, but this
monitoring should go beyond simple identification of heavy-hitters, or
tracking their frequencies, as is considered in most prior
formulations of heavy-hitter tracking such as
\cite{CM03,MM02,MG82,CCF04,EV02}. In this work we initiate the study
of tracking additional properties of heavy-hitters by considering
tracking of correlated heavy hitters.

%None of the above
%mining tasks on streams would have been possible using a traditional
%heavy-hitter algorithm on streams, since they do not consider
%correlations across multiple dimensions.

%---------------------------
\subsection{Approximate CHH}
\label{sec:problem}
%---------------------------

It is easy to prove that exact identification of heavy-hitters in a
single dimension is impossible using limited space, and one pass
through the input. Hence, the CHH problem is also impossible to solve
in limited space, using a single pass through the input. Due to this,
we consider the following approximate version of the problem. We
introduce additional approximation parameters, ${\epsilon}_1$ and
${\epsilon}_2$ ($0 < {\epsilon}_1 \le \frac{{\phi}_1}{2}$, $0 <
{\epsilon}_2 < {\phi}_2$), which stand for the approximation errors
along the primary and the secondary dimensions, respectively. We seek
an algorithm that provides the following guarantees.

%---------------
\begin{problem}
\label{prob:approx-CHH}
{\bf Approximate Identification of Correlated Heavy-Hitters.}  Given a
data stream $S$ of $(d,s)$ tuples of length $N$, thresholds ${\phi}_1$
and ${\phi}_2$:

\begin{enumerate}
\item
Report any value $d$ such that $f_d > {{\phi}_1}N$ as a
heavy-hitter along the primary dimension.

\item
No value $d$ such that $f_d < ({\phi}_1 - {\epsilon}_1)N$, should be reported as
a heavy-hitter along the primary dimension.

\item
For any value $d$ reported above, report any value $s$ along the secondary
dimension such that $f_{d,s} > {{\phi}_2}f_d$ as a CHH.

\item
For any value $d$ reported above, no value $s$ along the secondary
dimension such that $f_{d,s} < ({{\phi}_2} - {{\epsilon}_2})f_d$
should be reported as a CHH occurring alongwith $d$.
\end{enumerate}
\end{problem}
%---------------

With this problem formulation, false positives are possible, but false
negatives are not. In other words, if a pair $(d,s)$ is a CHH
according to the definition in Problem~\ref{prob:exact-CHH}, then it
is a CHH according to the definition in Problem~\ref{prob:approx-CHH},
and will be returned by the algorithm. But an algorithm for Problem
\ref{prob:approx-CHH} may return a pair $(d,s)$ that are not exact
CHHs, but whose frequencies are close to the required thresholds.

%-------------------------
\subsection{Contributions}
%-------------------------
Our contributions are as follows.

\begin{itemize}
\item 
We formulate exact and approximate versions of the problem of
identifying CHHs in a multidimensional data stream, and present a
small-space approximation algorithm for identifying approximate CHHs
in a single pass. Prior literature on correlated aggregates have
mostly focused on the correlated sum, and these techniques are not
applicable for CHH. Our algorithm for approximate CHH identification
is based on a nested application of the Misra-Gries algorithm~\cite{MG82}.

%We address the question of analyzing heavy-hitters along other dimensions for a 
%multidimensional stream.
%We introduce two problems for this: identifying items that occur frequently along 
%with the heavy-hitters, and computing the number of distinct items along some other dimension that occur with the 
%heavy-hitters. We prove theoretically that the second problem has no small-space solution.

\item
We provide a provable guarantee on the approximation error. We show
that there are no false negatives, and the error in the false
positives is controlled. When greater memory is available, this error
can be reduced. The space taken by the algorithm as well as the
approximation error of the algorithm depend on the sizes of two
different data structures within the algorithm. The total space taken
by the sketch is minimized through solving a constrained optimization
problem that minimizes the total space taken subject to providing the
user-desired error guarantees.

\item
We present results from our simulations on a) a stream of more than
1.4 billion (50 GB trace) anonymized packet headers from an OC48 link
(collected by CAIDA \cite{CAIDA}), and b) a sample of 240 million
2-grams extracted from English fiction books \cite{ngram}.  We
compared the performance of our small-space algorithm with a slow, but
exact algorithm that goes through the input data in multiple
passes. Our experiments revealed that even with a space budget of a
few megabytes, the average error of our algorithm was very small,
showing that it is viable in practice.
\end{itemize}

Along each dimension our algorithm maintains frequency estimates of
mostly those values (or pairs of values) that occur frequently. For
example, in a stream of (destination IP, source IP) tuples, 
for every destination that sends a significant fraction of
traffic on a link, we maintain mostly the sources that occur
frequently along with this destination. Note that the set of
heavy-hitters along the primary dimension can change as the stream
elements arrive, and this influences the set of CHHs along the
secondary dimension. For example, if an erstwhile heavy-hitter
destination $d$ no longer qualifies as a heavy-hitter with increase in
$N$ (and hence gets rejected from the sketch), then a source $s$
occurring with $d$ should also be discarded from the sketch. This
interplay between different dimensions has to be handled carefully
during algorithm design.\\

%---------------------
\noindent{\bf Roadmap:} The rest of this paper is organized as
follows. We present related work in Section~\ref{sec:relatedwork}.
In Section~\ref{sec:algo} we present the algorithm description,
followed by the proof of correctness in Section~\ref{sec:correctness},
and the analysis of the space complexity in
Section~\ref{sec:analysis}. We present experimental results in
Section~\ref{sec:simulation}.

%---------------------
\section{Related Work}
\label{sec:relatedwork}
%---------------------

In the data streaming literature, there is a significant body of work
on correlated aggregates (\cite{ADGKMS03,GKS01,CTX09}), as well as on
the identification of heavy hitters
(\cite{MM02,MG82,CCF04,CM2005}). See \cite{CH09} for a recent overview
of work on heavy-hitter identification. None of these works consider
correlated heavy-hitters.

Estan {\em et al.} \cite{ESV03} and Zhang {\em et al.} \cite{ZSSDL04}
have independently studied the problem of identifying heavy-hitters
from multi-dimensional packet streams, but they both define a
multidimensional tuple as a heavy-hitter if it occurs more than
${\phi}N$ times in the stream, $N$ being the stream size -- the
interplay across different dimensions is not considered.

There is significant prior work on correlated aggregate computation
that we now describe. The problems considered in the literature
usually take the following form. On a stream of two dimensional data
items $(x,y)$ the query asks to first apply a selection predicate
along the $x$ dimension, of the form $x \ge c$ or $x < c$ (for a value
$c$ provided at query time), followed by an aggregation along the $y$
dimension. The difference when compared with this formulation is that
in our case, the selection predicate along the $x$ dimension is one
that involves frequencies and heavy-hitters, rather than a simple
comparison.

Gehrke {\em et al}~\cite{GKS01} addressed correlated aggregates where
the aggregate along the primary dimension was an extremum (min or max)
or the average, and the aggregate along the secondary dimension was
sum or count. For example, given a stream $S$ of $(x,y)$ tuples, their
algorithm could approximately answer queries of the following form:
``Return the sum of $y$-values from $S$ where the corresponding $x$
values are greater than a threshold $\alpha$.'' They describe a data
structure called {\em adaptive histograms}, but these did not come
with provable guarantees on performance. 
Ananthakrishna {\em et al} \cite{ADGKMS03} presented algorithms with
provable error bounds for correlated sum and count. Their solution was
based on the quantile summary of \cite{GK01}.  With this technique,
heavy-hitter queries cannot be used as the aggregate along the primary
dimension since they cannot be computed on a stream using limited
space. Cormode, Tirthapura, and Xu~\cite{CTX09} presented algorithms
for maintaining the more general case of {\em time-decayed} correlated
aggregates, where the stream elements were weighted based on the time
of arrival. This work also addressed the ``sum'' aggregate, and the
methods are not directly applicable to heavy-hitters. Other work in this 
direction includes~\cite{BT07,XTB08}. Tirthapura and
Woodruff~\cite{TW12} present a general method that reduces the
correlated estimation of an aggregate to the streaming computation of the
aggregate, for functions that admit sketches of a particular
structure. These techniques only apply to selection predicates of
the form $x > c$ or $x < c$, and do not apply to heavy-hitters, as we
consider here.

The heavy-hitters literature has usually focused on the following problem.
Given a sequence of elements $A = (a_1,a_2,\ldots,a_N)$ and a user-input
threshold $\phi \in (0,1)$, find data items that occur more than $\phi N$
times in $A$. Misra and Gries~\cite{MG82} presented a deterministic algorithm
for this problem, with space complexity being $O(\frac{1}{\phi})$, time
complexity for updating the sketch with the arrival of each element being 
$O(\log{\frac{1}{\phi}})$, and query time complexity
being $O(\frac{1}{\phi})$. For exact identification of heavy-hitters, their
algorithm works in two passes. For approximate heavy-hitters, their algorithm used
only one pass through the sequence, and had the following approximation
guarantee. Assume user-input threshold $\phi$ and approximation error
$\epsilon < \phi$. Note that for an online algorithm, $N$ is the number of
elements received so far.

\begin{itemize}
\item All items whose frequencies exceed ${\phi}N$ are output.
i.e. there are no false negatives.
\item No item with frequency less than $(\phi-\epsilon)N$ is output. 
\end{itemize}

Demaine {\em et al} \cite{DLM02} and Karp {\em et al} \cite{KSP03}
improved the sketch update time per element of the Misra-Gries
algorithm from $O(\log{\frac{1}{\phi}})$ to $O(1)$, using an advanced
data structure combining a hashtable, a linked list and a set of
doubly-linked lists. Manku and Motwani \cite{MM02} presented a
deterministic ``Lossy Counting'' algorithm that offered the same
approximation guarantees as the one-pass approximate Misra-Gries
algorithm; but their algorithm required
$O(\frac{1}{\epsilon}\log{({\epsilon}N)})$ space in the worst
case. For our problem, we chose to extend the Misra-Gries algorithm as
it takes asymptotically less space than \cite{MM02}.

\remove{Cormode and Muthukrishnan \cite{CM2005}
came up with a probabilistic ``Count-Min Sketch'' that provided the
following guarantees: for a stream which allows only insertion of
items and not deletion, it identified every item with a frequency
more than ${\phi}N$; and with probability at least $1-\theta$ (for
some input parameter $\theta \in (0,1)$), it did not output any item
whose frequency is less than $(\phi-\epsilon)N$. It required
$O(\frac{1}{\epsilon}\log{\frac{N}{\theta}})$ space and
$O(\log{\frac{N}{\theta}})$ update time per item; and
$O(\log{\frac{N}{\theta}})$ pairwise independent hash functions to maintain the data structure. 
Clearly, the Misra-Gries algorithm takes asymptotically less space than either of these two, and can be 
implemented by simpler data structures.

Our problem is different from the well-known problem of association
rule mining. Following the terminology of \cite{AIS93,AS94}, if we
consider the problem of mining all association rules of the form $X
\Rightarrow Y$ with {\em support} $\phi$ and {\em confidence}
$\psi$, where $X$ and $Y$ are 1-itemsets, then, {\em both} $X$ and
$Y$ would have to be present in at least $\phi$ fraction of the
stream elements, which is clearly not our requirement.}

%Moreover, the
%existing association rule mining algorithms (even \cite{CHNW96})
%typically make multiple passes over the data, which is not
%acceptable in a streaming scenario.

%--------------------------------
\section{Algorithm and Analysis}
%--------------------------------

\subsection{Intuition and Algorithm Description}
\label{sec:algo}

Our algorithm is based on a nested application of an algorithm for
identifying frequent items from an one-dimensional stream, due to
Misra and Gries \cite{MG82}. We first describe the Misra-Gries
algorithm (henceforth called the MG algorithm). Suppose we are given
an input stream $a_1,a_2,\ldots$, and an error threshold $\epsilon, 0
< \epsilon < 1$.  The algorithm maintains a data structure
$\mathcal{D}$ that contains at most $\frac{1}{\epsilon}$ (key, count)
pairs.  On receiving an item $a_i$, it is first checked if a tuple
$(a_i, \cdot)$ already exists in $\mathcal{D}$.  If it does, $a_i$'s
count is incremented by 1; otherwise, the pair $(a_i,1)$ is added to
$\mathcal{D}$.  Now, if adding a new pair to $\mathcal{D}$ makes
$|\mathcal{D}|$ exceed $\frac{1}{\epsilon}$, then for each (key,
count) pair in $\mathcal{D}$, the count is decremented by one; and any
key whose count falls to zero is discarded. This ensures at least the
key which was most recently added (with a count of one) would get
discarded, so the size of $\mathcal{D}$, after processing all pairs,
would come down to $\frac{1}{\epsilon}$ or less. Thus, the space
requirement of this algorithm is $O(\frac{1}{\epsilon})$. The data
structure $\mathcal{D}$ can be implemented using hashtables or
height-balanced binary search trees. At the end of one pass through
the data, the MG algorithm maintains the frequencies of keys in the
stream with an error of no more than $\epsilon n$, where $n$ is the
size of the stream. The MG algorithm can be used in exact
identification of heavy hitters from a data stream using two passes
through the data.

In the scenario of limited memory, the MG algorithm can be used to
solve problem~\ref{prob:exact-CHH} in three passes through the data,
as follows. We first describe a four pass algorithm. In the first two
passes, heavy-hitters along the primary dimension are identified,
using memory $O(1/{\phi}_1)$. Note that this is asymptotically the
minimum possible memory requirement of any algorithm for identifying
heavy-hitters, since the size of output can be
$\Omega\left(\frac{1}{{\phi}_1}\right)$. In the next two passes,
heavy-hitters along the secondary dimension are identified for each
heavy-hitter along the primary dimension. This takes space
$O\left(\frac{1}{{\phi}_2}\right)$ for each heavy-hitter along the
primary dimension. The total space cost is
$O\left(\frac{1}{{{\phi}_1}{{\phi}_2}}\right)$, which is optimal,
since the output could be
$\Omega\left(\frac{1}{{{\phi}_1}{{\phi}_2}}\right)$ elements. The
above algorithm can be converted into a {\em three} pass exact
algorithm by combining the second and third passes.

The high-level idea behind our single-pass algorithm for
Problem~\ref{prob:approx-CHH} is as follows.  The MG algorithm for an
one-dimensional stream, can be viewed as maintaining a small space
``sketch'' of data that (approximately) maintains the frequencies of
each distinct item $d$ along the primary dimension; of course, these
frequency estimates are useful only for items that have very high
frequencies. For each distinct item $d$ along the primary dimension,
apart from maintaining its frequency estimate ${\hat{f}}_d$, our
algorithm maintains an embedded MG sketch of the substream $S_d$
induced by $d$, i.e. $S_d = \{(x,y)| ((x,y) \in S) \land (x =
d)\}$. The embedded sketch is a set of tuples of the form $(s,
{\hat{f}}_{d,s})$, where $s$ is an item that occurs in $S_d$, and
${\hat{f}}_{d,s}$ is an estimate of the frequency of the pair $(d,s)$
in $S$ (or equivalently, the frequency of $s$ in $S_d$). While the
actions on ${\hat{f}}_d$ (increment, decrement, discard) depend on how
$d$ and the other items appear in $S$, the actions on
${\hat{f}}_{d,s}$ depend on the items appearing in $S_d$. Further, the
sizes of the tables that are maintained have an important effect on
both the correctness and the space complexity of the algorithm.

We now present a more detailed description. The algorithm maintains a
table $H$, which is a set of tuples $(d,{\hat{f}}_d, H_d)$, where $d$
is a value along the primary dimension, ${\hat{f}}_d$ is the estimated
frequency of $d$ in the stream, and $H_d$ is another table that stores
the values of the secondary attribute that occur with $d$.  $H_d$
stores its content in the form of (key, count) pairs, where the keys
are values ($s$) along the secondary attribute and the counts are the
frequencies of $s$ in $S_d$, denoted as ${\hat{f}}_{d,s}$, alongwith
$d$.

The maximum number of tuples in $H$ is $s_1$, and the maximum number
of tuples in each $H_d$ is $s_2$. The values of $s_1$ and $s_2$ depend
on the parameters $\phi_1,\phi_2, \epsilon_1,\epsilon_2$, and are
decided at the start of the algorithm. Since $s_1$ and $s_2$ effect
the space complexity of the algorithm, as well as the correctness
guarantees provided by it, their values are set based on an
optimization procedure, as described in Section \ref{sec:analysis}.

The formal description is presented in Algorithms \ref{algo:init},
\ref{algo:update} and \ref{algo:report}. Before a stream element is
received, Algorithm \ref{algo:init} {\sf Sketch-Initialize} is invoked
to initialize the data structures. Algorithm \ref{algo:update} {\sf
Sketch-Update} is invoked to update the data structure as each stream
tuple $(x,y)$ arrives. Algorithm \ref{algo:report} {\sf Report-CHH}
is used to answer queries when a user asks for the CHHs in the stream
so far.

On receiving an element $(x,y)$ of the stream, the following three
scenarios may arise. We explain the action taken in each.
\begin{enumerate}
\item
If $x$ is present in $H$, and $y$ is present in $H_x$, then both
${\hat{f}}_{x}$ and ${\hat{f}}_{x,y}$ are incremented.

\item
If $x$ is present in $H$, but $y$ is not in $H_x$, then $y$ is added
to $H_x$ with a count of 1. If this addition causes $|H_x|$ to exceed
its space budget $s_2$, then for each (key, count) pair in $H_x$, the
count is decremented by 1 (similar to the MG algorithm). If the count
of any key falls to zero, the key is dropped from $H_x$. Note that
after this operation, the size of $H_x$ will be at most $s_2$.

\item
If $x$ is not present in $H$, then an entry is created for $x$ in $H$
by setting ${\hat{f}}_{x}$ to 1, and by initializing $H_x$ with the
pair $(y,1)$. If adding this entry causes $|H|$ to exceed $s_1$, then
for each $d \in H$, $f_d$ is decremented by $1$.  If the decrement
causes ${\hat{f}}_d$ to be zero, then we simply discard the entry for
$d$ from $H$.

Otherwise, when $f_d$ is decremented, the algorithm keeps the sum of
the $\hat{f_{d,s}}$ counts within $H_d$ equal to $f_d$; the detailed
correctness is proved in Section \ref{sec:analysis}. To achieve this,
an arbitrary key $s$ is selected from $H_d$ such that such that
${\hat{f}}_{d,s} > 0$, and ${\hat{f}}_{d,s}$ is decremented by $1$.
If ${\hat{f}}_{d,s}$ falls to zero, $s$ is discarded from $H_d$.
\end{enumerate}

%-------------------
\begin{algorithm}
\caption{{\sf Sketch-Initialize}$({\phi}_1,{\phi}_2,{\epsilon}_1,{\epsilon}_2)$}
\label{algo:init}
\KwIn{Threshold for primary dimension ${\phi}_1$; Threshold for secondary dimension ${\phi}_2$; 
      Tolerance for primary dimension ${\epsilon}_1$;
      Tolerance for secondary dimension ${\epsilon}_2$}
\BlankLine
$H \gets \Phi$

Set $s_1$ and $s_2$ as described in Section \ref{sec:analysis}.
%$s_1 \gets \max\left(\frac{1}{{\epsilon}_1},\frac{8}{{\phi}_1{\epsilon}_2}\right)$;
%$s_2 \gets \frac{1}{{\epsilon}_2 - \frac{4}{s_1{\phi}_1}}$
\end{algorithm}

%--------------------------
\begin{algorithm}
\caption{{\sf Sketch-Update}$(x, y)$}
\label{algo:update}
\KwIn{Element along primary dimension $x$; Element along secondary dimension $y$}
\BlankLine
\eIf{$x \in H$}{
	${\hat{f}}_{x} \gets {\hat{f}}_{x} + 1$\;
	\eIf{$y \in H_x$}{
	   \tcc{Both $x$ and $y$ are present}
	   Increment ${\hat{f}}_{x,y}$ in $H_x$ by 1\;
	}
	%the first else block begins
	{
	   \tcc{$x \in H$, but $y \not\in H_x$}
	   Add the tuple $(y,1)$ to $H_x$\;
	   \If {$|H_x| > s_2$}
	   {
	      \ForEach{$(s,{\hat{f}}_{d,s}) \in H_x$}
	      {
				${\hat{f}}_{d,s} \gets {\hat{f}}_{d,s} - 1$\;\nllabel{line:decrement_f_d_s}
				\If{${\hat{f}}_{d,s} = 0$}
				{
				discard $(s,{\hat{f}}_{d,s})$ from $H_x$\;
				}
			}
		}
	} %end first else block
}
{
	\tcc{Neither of $x$ or $y$ is present}
	$H_x \gets \Phi$;
	Add $(y,1)$ to $H_x$; ${\hat{f}}_{x} \gets 1$\;

	\If {$|H| > s_1$}{
		\ForEach{$d \in H$}
		{
			${\hat{f}}_d \gets {\hat{f}}_d - 1$\;

			\If {there exists $s$ such that ${\hat{f}}_{d,s} > 0$}
			{
			  Choose an arbitrary 
                          $(s,{\hat{f}}_{d,s}) \in H_d$ such that
                          ${\hat{f}}_{d,s} > 0$\; \nllabel{line:arbitrary}

			  ${\hat{f}}_{d,s} \gets {\hat{f}}_{d,s} - 1$\;\nllabel{line:decrement_f_d_s_arbitrary}

			  \If {${\hat{f}}_{d,s} = 0$}
			  {
			    discard $(s,{\hat{f}}_{d,s})$ from $H_d$;
			  }
			}
			\If {${\hat{f}}_d = 0$}
			{
 			  Discard $(d, H_d)$ from $H$\;\nllabel{line:discard_d_H_d}
			}
		}
	}
}
\end{algorithm}

%----------------------------
\begin{algorithm}
\caption{{\sf Report-CHH}($N$)}
\label{algo:report}
\KwIn{Size of the stream $N$}
\BlankLine
\ForEach{$d \in H$}
{
 \If{${\hat{f}}_d \ge ({\phi}_1- \frac{1}{s_1})N$}
 {
   Report $d$ as a frequent value of the primary attribute\;
   \ForEach{$(s,{\hat{f}}_{d,s}) \in H_d$}
   {
     \If{${\hat{f}}_{d,s} \ge ({\phi}_2- \frac{1}{s_2}){\hat{f}}_d - \frac{N}{s_1}$}
     {
       Report $s$ as a CHH occurring with $d$\;
     }
   }
 }
}
\end{algorithm}

%-----------------------------------
\subsection{Algorithm Correctness}
\label{sec:correctness}

\newcommand{\fd}{f_d}
\newcommand{\fds}{f_{d,s}}
\newcommand{\estfd}{\hat{f}_d}
\newcommand{\estfds}{\hat{f}_{d,s}}
\newcommand{\sumfds}{\Sigma_d}
%---------------------

In this section, we show the correctness of the algorithm, subject to 
the following constraints on $s_1$ and $s_2$. In Section
\ref{sec:analysis}, we assign values to $s_1$ and $s_2$ in such a
manner that the space taken by the data structure is minimized.

\begin{constraint}
\label{const:s1}
$$\frac{1}{s_1} \le {\epsilon}_1$$
\end{constraint}

\begin{constraint}
\label{const:s1s2}
$$\frac{1}{s_2} + \frac{1+{\phi}_2}{s_1({\phi}_1-{\epsilon}_1)} \le {\epsilon}_2$$
\end{constraint}

Consider the state of the data structure after a stream $S$ of length
$N$ has been observed. Consider a value $d$ of the primary attribute,
and $s$ of the secondary attribute. Let $f_d$ and $f_{d,s}$ be defined
as in Section \ref{sec:introduction}. Our analysis focuses on the
values of variables $\estfd$ and $\estfds$, which are updated in
Algorithms \ref{algo:update} and used in Algorithm
\ref{algo:report}. For convenience, if $d$ is not present in $H$ then
we define $\estfd=0$.  Similarly, if $d$ is not present in $H$, or if
$(d,s)$ is not present in $H_d$, then we define $\estfds=0$.

%-------------------------------
\begin{lemma}
\label{lemm:dstIP_epsilon} 
$$\estfd \ge \fd - \frac{N}{s_1}$$
\end{lemma}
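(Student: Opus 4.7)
The plan is to adapt the standard Misra--Gries correctness argument to the primary-dimension table $H$, treating the embedded tables $H_d$ as irrelevant for this particular bound: only updates to $\hat{f}_d$ matter. The key quantity to control is the number of times the outer decrement loop in Algorithm~\ref{algo:update} (the branch when $x \notin H$ and $|H| > s_1$) is executed; call this count $p$.

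First I would establish the bound $p \le N/(s_1+1)$ by a potential/credit argument. Consider the potential $\Phi = \sum_{d \in H} \hat{f}_d$. Each incoming stream element contributes exactly $+1$ to $\Phi$: either it increments an existing counter, or it creates a new entry with $\hat{f}_x = 1$. On the other hand, whenever the outer decrement loop triggers, we have just inserted a fresh entry so $|H| = s_1 + 1$, and the loop decrements every one of these $s_1 + 1$ counters by $1$, so $\Phi$ drops by exactly $s_1 + 1$. Since $\Phi \ge 0$ at all times and the total positive contribution is $N$, we get $p(s_1+1) \le N$, hence $p \le N/(s_1+1) \le N/s_1$.

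Second I would show that for every value $d$, $f_d - \hat{f}_d \le p$. I will track the ``gap'' $g_d := f_d - \hat{f}_d$ over time, using the convention $\hat{f}_d = 0$ whenever $d \notin H$. The gap starts at $0$. Every occurrence of $d$ in the stream increases $f_d$ by $1$ and also increases $\hat{f}_d$ by $1$ (either by incrementing its existing counter, or by creating a new entry with count $1$); so $g_d$ is unchanged by occurrences of $d$. An occurrence of some other item $x \ne d$ leaves $f_d$ alone and, in the worst case, triggers a decrement pass which decreases $\hat{f}_d$ by $1$ when $d \in H$ (possibly evicting $d$, in which case $\hat{f}_d$ drops from $1$ to $0$), and otherwise leaves $\hat{f}_d = 0$ unchanged. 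In every case $g_d$ grows by at most $1$ per decrement pass and never shrinks. Thus $g_d \le p$.

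Combining the two bounds yields $\hat{f}_d \ge f_d - p \ge f_d - N/s_1$, which is the claim. The only subtle point — and the place I would write most carefully — is the bookkeeping in the second step when $d$ is evicted and later reinserted: one must verify that the invariant $g_d \le (\text{number of decrement passes seen so far})$ is preserved across the eviction event (where $\hat{f}_d$ drops from $1$ to $0$, so $g_d$ jumps by exactly the one unit already charged to that pass) and across the subsequent reinsertion (where both $f_d$ and $\hat{f}_d$ rise by $1$, leaving $g_d$ fixed). Everything else is a direct translation of the Misra--Gries analysis, and neither $s_2$ nor the inner tables $H_d$ play any role in this lemma.
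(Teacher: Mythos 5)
Your proof is correct and follows essentially the same route as the paper's: bound the number of outer decrement passes by $N/(s_1+1)$ by comparing total increments to total decrements of the counters in $H$, then observe that $\hat{f}_d$ receives all $f_d$ increments and loses at most one unit per pass. Your write-up is in fact somewhat more careful than the paper's (the explicit potential function and the gap bookkeeping across eviction and reinsertion), but the underlying argument is identical.
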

%-------------------------------

\begin{proof}
The total number of increments in the $s_1$ counters that keep track
of the counts of the different values of the primary attribute is $N$.
Each time there is a decrement to $\estfd$ (in Line 20 of Algorithm
\ref{algo:update}), $s_1+1$ different counters are decremented.
The total number of decrements, however, cannot be more than the total
number of increments, and hence is at most $N$. So the number of times
the block of lines 19-31 in Algorithm~\ref{algo:update} gets executed
is at most $\frac{N}{s_1 + 1} < \frac{N}{s_1}$. We also know that
$\estfd$ is incremented exactly $\fd$ times, hence the final value of
$\estfd$ is greater than $f_d-\frac{N}{s_1}$. 
\end{proof}

%-----------------
\begin{lemma}
\label{lem:dstIP} 
Assume that Constraint~\ref{const:s1} is true.
If $\fd > {\phi}_1 N$, then $d$ is reported by
Algorithm~\ref{algo:report}  as a frequent item.
Further, if $\fd < (\phi_1 - \epsilon_1)N$, then 
$d$ is not reported as a frequent item.
\end{lemma}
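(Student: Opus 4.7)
The plan is to directly combine Lemma~\ref{lemm:dstIP_epsilon} with the reporting rule of Algorithm~\ref{algo:report}, which reports $d$ exactly when $\estfd \ge (\phi_1 - 1/s_1)N$. So the lemma reduces to establishing tight enough two-sided bounds relating $\estfd$ to $\fd$.

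For the first (no-false-negatives) half, I would invoke Lemma~\ref{lemm:dstIP_epsilon} to get $\estfd \ge \fd - N/s_1$. Under the hypothesis $\fd > \phi_1 N$, this immediately yields $\estfd > \phi_1 N - N/s_1 = (\phi_1 - 1/s_1)N$, which is precisely the reporting threshold, so $d$ is reported. This part uses nothing beyond Lemma~\ref{lemm:dstIP_epsilon} and is where Constraint~\ref{const:s1} is not yet needed.

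For the second half I need a matching upper bound of the form $\estfd \le \fd$. The key observation to record is that every increment to $\estfd$ in Algorithm~\ref{algo:update} (whether via the ``$\estfd \gets \estfd + 1$'' step or via the initialization $\estfd \gets 1$ when $x=d$ is first inserted into $H$) is caused by a distinct occurrence of $d$ in the stream; the only other modifications are decrements or a discard that resets the counter. Thus across the entire execution the number of increments to the current counter for $d$ is at most $\fd$, and since decrements only lower the value, $\estfd \le \fd$. Combining this with $\fd < (\phi_1 - \epsilon_1)N$ and Constraint~\ref{const:s1} (which gives $\phi_1 - \epsilon_1 \le \phi_1 - 1/s_1$), I get $\estfd < (\phi_1 - 1/s_1)N$, so the reporting condition fails and $d$ is not reported.

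There is no real obstacle beyond being careful about the upper bound $\estfd \le \fd$, because $d$ can be inserted, discarded, and re-inserted multiple times over the course of the stream; the argument has to make sure we are counting each occurrence of $d$ at most once against the current value of $\estfd$. Once that invariant is stated cleanly, both implications are one-line consequences of Lemma~\ref{lemm:dstIP_epsilon}, the reporting threshold, and Constraint~\ref{const:s1}.
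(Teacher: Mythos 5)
Your proposal is correct and follows essentially the same route as the paper: the first half is Lemma~\ref{lemm:dstIP_epsilon} plus the reporting threshold, and the second half is the upper bound $\estfd \le \fd$ combined with Constraint~\ref{const:s1}. The only difference is that you spell out why $\estfd \le \fd$ holds (each increment charged to a distinct occurrence of $d$, even across discard/re-insert cycles), which the paper simply asserts; that is a reasonable bit of extra care, not a different argument.
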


\begin{proof}
Suppose $\fd \ge \phi_1 N$. From Lemma~\ref{lemm:dstIP_epsilon}, 
$\estfd \ge \fd - {\epsilon}_1 N \ge {{\phi}_1}N -  {\epsilon}_1 N$.
Hence Algorithm~\ref{algo:report} will report $d$ (see Lines 2 and 3).
Next, suppose that $\fd < (\phi_1 - \epsilon_1)N$. Since $\estfd \le \fd$, 
Algorithm~\ref{algo:report} will not report $d$ as a frequent item.
\end{proof}

%-------------------------

\begin{lemma} 
\label{lemm:invariant} 
$$
\sum_{(s,\cdot) \in H_d} \estfds \le \estfd
$$
\end{lemma}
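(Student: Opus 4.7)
The plan is to prove the inequality as an invariant that is preserved by every call to \textsf{Sketch-Update}, by induction on the number of stream elements processed so far. The base case is trivial: before any tuple arrives, $H=\Phi$, so under the convention that $\estfd=0$ and $\estfds=0$ whenever $d\notin H$, both sides equal $0$. For the inductive step I would fix an arbitrary primary value $d$ and do a case analysis on how processing the new tuple $(x,y)$ can affect $\estfd$ and the entries $\{\estfds\}_s$ of $H_d$. The three branches of \textsf{Sketch-Update} give a natural case split, and within each one the relevant subcase is whether $x=d$ or $x\ne d$.

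The easy branches are cases 1 and 2. If $x\ne d$ in either branch, then neither $\estfd$ nor any $\estfds$ is touched, so the invariant is preserved trivially by the induction hypothesis. If $x=d$ and $y\in H_x$ (case 1), both $\estfd$ and exactly one $\estfds$ (namely for $s=y$) are incremented by $1$, so both sides of the inequality grow by the same amount. If $x=d$ but $y\notin H_x$ (case 2), $\estfd$ grows by $1$ and a new pair $(y,1)$ is inserted into $H_d$; the sum $\sum_s\estfds$ grows by exactly $1$ if no overflow is triggered in $H_d$, and grows by $1$ minus a further uniform decrement across $H_d$ if $|H_d|$ exceeds $s_2$. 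In either sub-situation the right-hand side increases by $1$ while the left-hand side increases by at most $1$, so the invariant is preserved.

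The interesting branch is case 3, where $x\notin H$, a fresh entry is created for $x$, and (possibly) $|H|$ overflows, triggering the decrement loop of lines 19--31. For $d=x$ the new entry satisfies $\estfd=1$ and $\sum_s\estfds=1$, matching term-by-term. For any other $d$ already in $H$, I would argue that during the loop, $\estfd$ drops by $1$ and at most one $\estfds$ drops by $1$: if $\sum_s\estfds\ge 1$ beforehand, line~\ref{line:arbitrary} can choose an $s$ with $\estfds>0$, both sides decrease by exactly $1$ and the invariant is preserved; if $\sum_s\estfds=0$ beforehand, then by the induction hypothesis the inequality has at least one unit of slack, and decrementing only $\estfd$ preserves it.

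The main obstacle is in that last sub-situation: we must rule out the scenario in which $\estfd$ is decremented while it is already $0$ (which would violate $\sum_s\estfds\le\estfd$ after the step if the sum were positive, and could in principle drive $\estfd$ negative). This is handled by a side observation, which I would state explicitly: for any $d\in H$ we always have $\estfd\ge 1$ on entry to the decrement loop, since $\estfd$ is set to $1$ on insertion (the very line that creates an $H_d$ entry), is only ever incremented otherwise, and line~\ref{line:discard_d_H_d} discards $d$ the moment $\estfd$ hits $0$; so the decrement leaves $\estfd\ge 0$, and when it hits $0$ the discard of $(d,H_d)$ restores both sides to $0$ under our convention. With that observation in hand, the case analysis closes and the invariant follows by induction.
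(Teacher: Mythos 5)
Your proof is correct and follows essentially the same route as the paper's: induction on the number of processed elements with a case analysis over the branches of \textsf{Sketch-Update}, where the only nontrivial case is the overflow decrement loop and the subcase $\sum_{s}\hat{f}_{d,s}=0$ is absorbed by the slack in the induction hypothesis. Your added observation that $\hat{f}_d\ge 1$ for every $d\in H$ on entry to the decrement loop (so the decrement cannot drive it negative) is a small point the paper leaves implicit, but it does not change the structure of the argument.
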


\begin{proof}
Let $\sumfds = \sum_{(s,\cdot) \in H_d} \estfds$.
Let $C(n)$ denote the condition $\sumfds \le \estfd$ after $n$
stream elements have been observed.
We prove $C(n)$ by induction on $n$.  
The base case is when $n=0$, and in this case,
$\estfds=\estfd=0$ for all $d,s$, and $C(0)$ is trivially true.
For the inductive step, assume that $C(k)$ is true, for $k \ge 0$. 
Consider a new element that arrives, say $(x,y)$, and consider 
Algorithm \ref{algo:update} applied 
on this element. We consider four possible cases.

(I) If $x=d$, and $d \in H$, then $\estfd$ is incremented by $1$, and
it can be verified (Lines 3-11) that $\sumfds$ increases 
by at most $1$ (and may even decrease). Thus $C(k+1)$ is true.

(II) If $x=d$, and $d \not\in H$, then initially, $\estfd$ and $\sumfds$
are both 1 (line 17). If $|H| \le s_1$, then both $\estfd$ and
$\sumfds$ remain 1, and $C(k+1)$ is true.
Suppose $|H| > s_1$, then both $\estfd$ and $\sumfds$ will go down to
$0$, since $H_d$ will be discarded from $H$. Thus $C(k+1)$ is true.

(III) If $x \neq d$, and $x \in H$, then neither $\estfd$ nor 
$\sumfds$ change. 

(IV) Finally, if $x \neq d$ and $x \not\in H$, then it is possible that
$\estfd$ is decremented (line 20). In this case, if $\sumfds > 0$,
then $\sumfds$ is also decremented (line 22), and $C(k+1)$ is
satisfied. If $\sumfds = 0$, then $C(k+1)$ is trivially satisfied
since $\estfd \ge 0$.
\end{proof}

%----------------
\begin{lemma}
\label{lemm:srcIP_epsilon} 
Subject to Constraint~\ref{const:s1},
$\estfds \ge \fds - \epsilon_2 \fd - \epsilon_1 N$.
\end{lemma}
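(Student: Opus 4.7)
The plan is to exploit the nested Misra--Gries structure of the sketch: the outer table $H$ tracks values along the primary dimension, while each inner table $H_d$ essentially runs an MG sketch on the substream of $y$-values paired with $d$, additionally perturbed by event-(b) decrements whenever $|H|$ overflows. My analysis will focus on the \emph{current epoch} of $d$, i.e., the maximal interval of time up to now during which $d$ has been continuously present in $H$. Let $t_d$ mark the start of this epoch, and split occurrences at $t_d$ to write $\fd = f_d^{\mathrm{pre}} + N_d^{\mathrm{ep}}$ and $\fds = f_{d,s}^{\mathrm{pre}} + f_{d,s}^{\mathrm{ep}}$, where $N_d^{\mathrm{ep}}$ is the number of $(d,\cdot)$ stream elements during the epoch; let $E^{(b)}_d$ be the number of event-(b) incidents that fire while $d \in H$. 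The edge case $d\notin H$ is immediate: $\estfds = 0$, and Lemma~\ref{lemm:dstIP_epsilon} with Constraint~\ref{const:s1} gives $\fd \le N/s_1 \le \epsilon_1 N$, so $\fds \le \fd \le \epsilon_1 N$ already satisfies the stated bound.

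For the main case $d \in H$, I first analyse the outer counter. Because $\estfd$ is set to $1$ on the first $(d,\cdot)$ arrival of the epoch, incremented on each subsequent $(d,\cdot)$, and decremented exactly once per event-(b), we have $\estfd = N_d^{\mathrm{ep}} - E^{(b)}_d$. Subtracting from $\fd = N_d^{\mathrm{ep}} + f_d^{\mathrm{pre}}$ and invoking Lemma~\ref{lemm:dstIP_epsilon} yields the key inequality $f_d^{\mathrm{pre}} + E^{(b)}_d \le N/s_1$, which bundles the pre-epoch misses and the event-(b) losses into a single outer budget.

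Next I perform the inner MG analysis on $H_d$ during the epoch. Let $D^{(a)}_d$ denote the number of event-(a) overflows of $H_d$. The mass $\sumfds$ grows by $1$ per $(d,\cdot)$ arrival (total $N_d^{\mathrm{ep}}$ increments), shrinks by $s_2+1$ per event-(a) (since $|H_d|=s_2+1$ at the moment of decrement), and shrinks by at most $1$ per event-(b); non-negativity of the mass then forces $D^{(a)}_d \le N_d^{\mathrm{ep}}/(s_2+1) \le \fd/s_2$. Since $\estfds$ receives exactly $f_{d,s}^{\mathrm{ep}}$ increments and at most $D^{(a)}_d + E^{(b)}_d$ decrements during the epoch, I obtain $\estfds \ge f_{d,s}^{\mathrm{ep}} - \fd/s_2 - E^{(b)}_d$. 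Using $f_{d,s}^{\mathrm{pre}} \le f_d^{\mathrm{pre}}$ together with the key inequality then gives $\estfds \ge \fds - \fd/s_2 - N/s_1$; applying Constraint~\ref{const:s1} to bound $N/s_1 \le \epsilon_1 N$, and observing that Constraint~\ref{const:s1s2} forces $1/s_2 \le \epsilon_2$, yields the stated conclusion.

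The main obstacle I anticipate is avoiding a double-count in the outer error: a naive analysis would charge $N/s_1$ once for the unrecoverable pre-epoch occurrences $f_{d,s}^{\mathrm{pre}}$ and again for the event-(b) decrements that chip away at $\estfds$. The resolution is the identity $\estfd = N_d^{\mathrm{ep}} - E^{(b)}_d$ which, via Lemma~\ref{lemm:dstIP_epsilon}, forces $f_d^{\mathrm{pre}}$ and $E^{(b)}_d$ to share a single $N/s_1$ budget rather than being paid for separately.
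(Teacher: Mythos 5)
Your proof is correct, and it bounds the same two error sources as the paper's proof: decrements to $\hat{f}_{d,s}$ caused by overflows of the inner table $H_d$, charged at most $f_d/(s_2+1)$ times via non-negativity of the total mass $\sum_{(s,\cdot)\in H_d}\hat{f}_{d,s}$, and decrements caused by overflows of the outer table $H$, charged at most $N/s_1$ times via Lemma~\ref{lemm:dstIP_epsilon}. The difference is in the bookkeeping. The paper counts globally over the whole stream: every occurrence of $(d,s)$ is an increment to $\hat{f}_{d,s}$, and the invariant of Lemma~\ref{lemm:invariant} is what makes this sound across evictions of $d$ (when $d$ leaves $H$ its inner counters are already zero, so nothing is lost at eviction beyond the counted decrements). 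You instead localize to the current epoch of $d$, write off the pre-epoch occurrences $f_{d,s}^{\mathrm{pre}}$ up front, and recover them through the identity $f_d-\hat{f}_d=f_d^{\mathrm{pre}}+E^{(b)}_d$, so that the pre-epoch mass and the event-(b) decrements share the single $N/s_1$ budget---the ``double-counting'' worry you flag is exactly the issue the paper sidesteps by counting globally. Your route avoids Lemma~\ref{lemm:invariant} entirely (only non-negativity of the epoch mass is needed) and makes eviction and re-insertion of $d$ fully explicit, at the price of the extra epoch machinery; the paper's global count is shorter but leans on the invariant and is terser about why the increments to $\hat{f}_{d,s}$ total exactly $f_{d,s}$ across epochs. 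Both arguments yield $\hat{f}_{d,s}\ge f_{d,s}-f_d/s_2-N/s_1$, from which the lemma follows via Constraints~\ref{const:s1} and~\ref{const:s1s2} exactly as you state.
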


\begin{proof}
Note that each time the tuple $(d,s)$ occurs in the stream,
$\estfds$ is incremented in Algorithm \ref{algo:update}. But
$\estfds$ can be less than $\fds$ because of decrements in Lines
\ref{line:decrement_f_d_s} or \ref{line:decrement_f_d_s_arbitrary}
in Algorithm \ref{algo:update}. We consider these two cases separately.

Let $\sumfds = \sum_{(s,\cdot) \in H_d} \estfds$. 
For decrements in Line 9, we observe that each time this line is
executed, $\sumfds$ reduces by $s_2+1$. From Lemma
\ref{lemm:invariant}, we know that $\sumfds \le \estfd \le \fd$.
Thus the total number of times $\estfds$ is decremented due to Line 9
is no more than $\frac{\fd}{s_2+1}$. From Constraint \ref{const:s1s2}, we know
$\frac{1}{s_2} < \epsilon_2$, and $\frac{\fd}{s_2+1} < \epsilon_2 \fd$.

For decrements in Line 23, we observe that $\estfds$ is decremented in
Line 23 no more than the number of decrements to $\estfd$, which was
bounded by $\frac{N}{s_1}$ in Lemma \ref{lemm:dstIP_epsilon}. From
Constraint \ref{const:s1}, this is no more than $\epsilon_1 N$.
\end{proof}

%---------------------

\begin{lemma}
\label{lemm:no-fn-srcIP} 
For any value $d$ that gets reported in line 3 of
Algorithm~\ref{algo:report}, any value $s$ of the secondary attribute
that occurs with $d$ such that $f_{d,s} > \phi_2 f_d$, will be
identified by line 6 of Algorithm~\ref{algo:report} as a CHH occurring
alongwith $d$.
\end{lemma}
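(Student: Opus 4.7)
The plan is to chain the per-pair frequency bound established for $\hat f_{d,s}$ with the reporting inequality of Algorithm \ref{algo:report}, starting from the hypothesis $f_{d,s} > \phi_2 f_d$ and arriving at the condition tested on line~5. I would not apply Lemma \ref{lemm:srcIP_epsilon} as stated, but rather use the sharper intermediate inequality
\[
\hat f_{d,s} \;\ge\; f_{d,s} - \frac{f_d}{s_2} - \frac{N}{s_1}
\]
which is actually what the proof of Lemma \ref{lemm:srcIP_epsilon} establishes: the decrements on line \ref{line:decrement_f_d_s} contribute at most $f_d/(s_2+1) < f_d/s_2$ (via Lemma \ref{lemm:invariant}, since each such step removes $s_2+1$ from $\Sigma_d\le f_d$), and the decrements on line \ref{line:decrement_f_d_s_arbitrary} contribute at most $N/s_1$ (via Lemma \ref{lemm:dstIP_epsilon}, since they piggy-back on decrements to $\hat f_d$).

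Once this is in hand, substituting $f_{d,s} > \phi_2 f_d$ gives
\[
\hat f_{d,s} \;>\; \Bigl(\phi_2 - \tfrac{1}{s_2}\Bigr) f_d - \tfrac{N}{s_1}.
\]
Finally I would replace $f_d$ by $\hat f_d$ on the right. Because $\hat f_d$ is incremented only when $d$ arrives (and is never bumped upward in any other step), $\hat f_d \le f_d$; and Constraint \ref{const:s1s2} forces $1/s_2 \le \epsilon_2 < \phi_2$, so the coefficient $\phi_2 - 1/s_2$ is strictly positive and the substitution only weakens the inequality. This yields
\[
\hat f_{d,s} \;>\; \Bigl(\phi_2 - \tfrac{1}{s_2}\Bigr) \hat f_d - \tfrac{N}{s_1},
\]
which is precisely the test on line~5 of Algorithm \ref{algo:report}; since $d$ has already been reported in line~3 by hypothesis, $s$ is therefore reported in line~6 as a CHH occurring with $d$.

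The main obstacle I anticipate is resisting the temptation to quote Lemma \ref{lemm:srcIP_epsilon} in its stated $(\epsilon_1,\epsilon_2)$ form: the additive slack $\epsilon_1 N$ from that looser bound exceeds the $N/s_1$ slack permitted by the reporting rule (Constraint \ref{const:s1} says only $1/s_1 \le \epsilon_1$, not equality), and a direct substitution closes the inequality in the wrong direction. Retaining the finer constants $1/s_2$ and $1/s_1$ from the decrement accounting, and postponing the swap $f_d \to \hat f_d$ to the very last step, is what makes the argument go through.
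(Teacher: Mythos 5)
Your proof is correct, and it takes a genuinely different --- and in fact more careful --- route than the paper's. The paper's own proof applies Lemma~\ref{lemm:srcIP_epsilon} in its stated form to obtain $\hat{f}_{d,s} > (\phi_2-\epsilon_2)\hat{f}_d - \epsilon_1 N$, and then concludes by asserting that $(\phi_2-\epsilon_2)\hat{f}_d - \epsilon_1 N$ \emph{is} the threshold tested in line 5 of Algorithm~\ref{algo:report}. But the threshold actually tested is $(\phi_2-\tfrac{1}{s_2})\hat{f}_d - \tfrac{N}{s_1}$, and since Constraints~\ref{const:s1} and~\ref{const:s1s2} only guarantee $1/s_1 \le \epsilon_1$ and $1/s_2 \le \epsilon_2$ (with possible slack, e.g.\ in Case I of the space optimization where $1/s_1$ can be strictly less than $\epsilon_1$), the tested threshold can be strictly larger than $(\phi_2-\epsilon_2)\hat{f}_d - \epsilon_1 N$; exceeding the smaller quantity does not imply exceeding the larger one, so the paper's final step closes the inequality in the wrong direction whenever the constraints are not tight. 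This is exactly the pitfall you flagged in your last paragraph. Your alternative --- extracting the sharper bound
\[
\hat{f}_{d,s} \;\ge\; f_{d,s} - \frac{f_d}{s_2} - \frac{N}{s_1}
\]
directly from the decrement accounting (at most $f_d/(s_2+1)$ decrements via line~\ref{line:decrement_f_d_s} by Lemma~\ref{lemm:invariant}, at most $N/(s_1+1)$ via line~\ref{line:decrement_f_d_s_arbitrary} by Lemma~\ref{lemm:dstIP_epsilon}), then substituting $f_{d,s} > \phi_2 f_d$, and only at the end replacing $f_d$ by $\hat{f}_d \le f_d$ (legitimate because Constraint~\ref{const:s1s2} forces $1/s_2 < \epsilon_2 < \phi_2$, so the coefficient $\phi_2 - 1/s_2$ is positive) --- lands exactly on the tested inequality $\hat{f}_{d,s} > (\phi_2-\tfrac{1}{s_2})\hat{f}_d - \tfrac{N}{s_1}$ and proves the lemma for the algorithm as written. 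In short, the paper's route buys brevity by quoting Lemma~\ref{lemm:srcIP_epsilon} off the shelf but leaves a gap at the last step; your route re-opens that lemma's proof to keep the exact constants $1/s_1$ and $1/s_2$ and thereby matches the reporting rule precisely.
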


\begin{proof}
From Lemma \ref{lemm:srcIP_epsilon}, 
\begin{eqnarray*}
\estfds & \ge & f_{d,s} - {{\epsilon}_2}f_d - {{\epsilon}_1}N      \\
        & >   & {{\phi}_2}f_d - {{\epsilon}_2}f_d - {{\epsilon}_1}N  \\
        & =   & ({\phi}_2 - {\epsilon}_2)f_d - {\epsilon}_1N       \\
        & \ge & ({\phi}_2 - {\epsilon}_2){\hat{f}}_d - {\epsilon}_1N
\end{eqnarray*}

where we have used $\fd \ge \estfd$.
The lemma follows since $({\phi}_2 - {\epsilon}_2){\hat{f}}_d - {\epsilon}_1N$ is the threshold
used in line 5 of Algorithm \ref{algo:report} to report a value of the secondary attribute
as a CHH.
\end{proof}

%--------------------------------------
\begin{lemma}
\label{lemm:fp-srcIP} 
Under Constraints~\ref{const:s1} and~\ref{const:s1s2}, 
for any value of $d$ that is reported as a heavy-hitter along the
primary dimension, then for a value $s'$ along the secondary
dimension, if $f_{d,s'} < (\phi_2 - \epsilon_2) \fd$, then the pair
$(d,s')$ will not be reported as a CHH.
\end{lemma}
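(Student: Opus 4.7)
The plan is to show that under the stated hypothesis $f_{d,s'} < (\phi_2-\epsilon_2)\fd$, the estimate $\estfds[s']$ falls strictly below the reporting threshold $(\phi_2-\tfrac{1}{s_2})\estfd - \tfrac{N}{s_1}$ used in line 5 of Algorithm \ref{algo:report}. First I would observe that $\estfds$ only ever moves by $+1$ when the tuple $(d,s)$ itself arrives, and by $-1$ through the two decrement lines (\ref{line:decrement_f_d_s} and \ref{line:decrement_f_d_s_arbitrary}), so trivially $\estfds[s'] \le f_{d,s'}$. Combined with the hypothesis this yields $\estfds[s'] < (\phi_2-\epsilon_2)\fd$, so it suffices to prove
\[
(\phi_2-\epsilon_2)\fd \;\le\; \Bigl(\phi_2-\tfrac{1}{s_2}\Bigr)\estfd - \tfrac{N}{s_1}.
\]

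To convert the right-hand side into a quantity involving $\fd$, I would invoke Lemma \ref{lemm:dstIP_epsilon}, which gives $\estfd \ge \fd - N/s_1$. Substituting and expanding,
\[
\Bigl(\phi_2-\tfrac{1}{s_2}\Bigr)\estfd - \tfrac{N}{s_1}
\;\ge\; \Bigl(\phi_2-\tfrac{1}{s_2}\Bigr)\fd - (1+\phi_2)\tfrac{N}{s_1},
\]
using $\phi_2 - 1/s_2 \le \phi_2$ in the error term. Rearranging the desired inequality then reduces to showing
\[
\Bigl(\epsilon_2 - \tfrac{1}{s_2}\Bigr)\fd \;\ge\; (1+\phi_2)\tfrac{N}{s_1}.
\]

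Now I would bring in the assumption that $d$ was reported in line 3 of Algorithm \ref{algo:report}, i.e.\ $\estfd \ge (\phi_1 - 1/s_1)N$. Since $\estfd \le \fd$ (the primary counter is likewise only incremented on arrivals of $d$) and $1/s_1 \le \epsilon_1$ by Constraint \ref{const:s1}, we obtain $\fd \ge (\phi_1 - \epsilon_1)N$. Plugging this in, the target inequality becomes
\[
\Bigl(\epsilon_2-\tfrac{1}{s_2}\Bigr)(\phi_1-\epsilon_1) \;\ge\; \tfrac{1+\phi_2}{s_1},
\]
which is exactly Constraint \ref{const:s1s2} after dividing through by $\phi_1-\epsilon_1$. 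Chaining everything backwards completes the proof.

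The main obstacle is purely bookkeeping: keeping the two separate error budgets from the primary and secondary sketches aligned, and in particular folding the slack $\phi_2 - 1/s_2 \le \phi_2$ at the right moment so that the residual inequality matches Constraint \ref{const:s1s2} verbatim rather than a slightly weaker variant. I expect no conceptual difficulty beyond this.
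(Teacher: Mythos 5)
Your proof is correct and follows essentially the same route as the paper's: the paper proves the contrapositive (if $(d,s)$ is reported then $f_{d,s} \ge (\phi_2-\epsilon_2)f_d$) via the identical chain $f_{d,s} \ge \hat{f}_{d,s}$, $\hat{f}_d \ge f_d - N/s_1$, $f_d \ge (\phi_1-\epsilon_1)N$, and Constraint~\ref{const:s1s2}, whereas you run the same inequalities in the forward direction. The only cosmetic difference is where the slack $\phi_2 - 1/s_2 \le \phi_2$ is absorbed, which both arguments need in order to land exactly on Constraint~\ref{const:s1s2}.
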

%--------------------------------------

\begin{proof}

We will prove the contrapositive of the above statement. Consider a
value $s$ such that $(d,s)$ is reported as a CHH. Then, we show that
$\fds \ge (\phi_2 - \epsilon_2) \fd$.

If $(d,s)$ is reported, then it must be true that 
$\estfds \ge (\phi_2-\frac{1}{s_2}) \estfd - \frac{N}{s_1}$ (Algorithm
\ref{algo:report}, line 5).
Using $\fds \ge \estfds$, and $\estfd \ge \fd - \frac{N}{s_1}$, we get:

\begin{eqnarray*}
\fds & \ge & \estfds \\
     & \ge & (\phi_2-\frac{1}{s_2}) \estfd - \frac{N}{s_1} \\
     & \ge & (\phi_2-\frac{1}{s_2}) (\fd - \frac{N}{s_1}) - \frac{N}{s_1} \\
     & =   & (\phi_2 - \frac{1}{s_2}) \fd - \frac{N}{s_1}\left(1+{\phi}_2-\frac{1}{s_2}\right)\\
     & \ge & ({\phi}_2 - \frac{1}{s_2}) \fd - \frac{f_d}{({\phi}_1-{\epsilon}_1)s_1}\left(1+{\phi}_2-\frac{1}{s_2}\right)\\
     &     & \mbox{ (since $d$ gets reported, by Lemma~\ref{lem:dstIP}, $f_d \ge ({\phi}_1-{\epsilon}_1)N \Rightarrow N \le \frac{f_d}{{\phi}_1-{\epsilon}_1}$)}\\
     & =   & \left({\phi}_2 - \frac{1}{s_2} - \frac{1}{({\phi}_1-{\epsilon}_1)s_1}\left(1+{\phi}_2-\frac{1}{s_2}\right)\right)f_d\\
     & \ge & f_d({\phi}_2-{\epsilon}_2) \mbox{(using Constraint~\ref{const:s1s2})}
\end{eqnarray*}
\end{proof}

Lemmas \ref{lemm:fp-srcIP}, \ref{lemm:no-fn-srcIP}, and
\ref{lem:dstIP} together yield the following.
\begin{theorem}
\label{thm:correctness}
If Constraints~\ref{const:s1} and~\ref{const:s1s2} are satisfied, then
Algorithms \ref{algo:init}, \ref{algo:update} and \ref{algo:report}
satisfy all the four requirements of Problem~\ref{prob:approx-CHH}.
\end{theorem}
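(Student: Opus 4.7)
The plan is to observe that the theorem is essentially a direct bookkeeping step: each of the four requirements of Problem~\ref{prob:approx-CHH} maps onto one of the previously established lemmas, and the constraints needed by each lemma have been explicitly carried through. So the proof is a short case-by-case combination.

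First I would dispatch requirements 1 and 2 (the primary dimension). Lemma~\ref{lem:dstIP} (which only uses Constraint~\ref{const:s1}) already packages both directions: if $f_d > \phi_1 N$ then $d$ is reported by Algorithm~\ref{algo:report}, and if $f_d < (\phi_1-\epsilon_1)N$ then $d$ is not reported. So for primary-dimension correctness, I would simply cite this lemma.

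Next I would handle requirement 3 (no false negatives on the secondary dimension). Lemma~\ref{lemm:no-fn-srcIP} shows that whenever $d$ is reported as a heavy hitter and $f_{d,s} > \phi_2 f_d$, the threshold check on line 5 of Algorithm~\ref{algo:report} accepts $s$ as a CHH; this lemma in turn rested on Lemma~\ref{lemm:srcIP_epsilon}, which required Constraint~\ref{const:s1}. Finally, for requirement 4 (controlled false positives on the secondary dimension), Lemma~\ref{lemm:fp-srcIP} uses both Constraints~\ref{const:s1} and~\ref{const:s1s2} to show that any pair $(d,s')$ with $f_{d,s'} < (\phi_2-\epsilon_2)f_d$ cannot possibly meet the report criterion on line 5. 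I would close the proof by noting that since both constraints are assumed in the theorem's hypothesis, all four lemma-based implications are available, and Problem~\ref{prob:approx-CHH} is satisfied.

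There is no real obstacle here: the genuine work has already been done in Lemmas~\ref{lemm:dstIP_epsilon}--\ref{lemm:fp-srcIP}, where the invariant $\sumfds \le \estfd$ and the choice of report thresholds $(\phi_1-\tfrac{1}{s_1})N$ and $(\phi_2-\tfrac{1}{s_2})\estfd - \tfrac{N}{s_1}$ were calibrated exactly so that Constraints~\ref{const:s1} and~\ref{const:s1s2} would close the error chain. The only thing worth being careful about in writing the proof up is to list the four Problem~\ref{prob:approx-CHH} requirements in the same order as in the problem statement and cite the matching lemma for each, so that the reader sees the one-to-one correspondence and is not left to re-derive which constraint is needed where.
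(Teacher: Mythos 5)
Your proposal matches the paper's own argument, which simply combines Lemmas~\ref{lem:dstIP}, \ref{lemm:no-fn-srcIP}, and \ref{lemm:fp-srcIP} to cover the four requirements; your requirement-by-requirement bookkeeping is just a more explicit write-up of the same one-line combination. No gaps.
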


%---------------------
\subsection{Analysis}
\label{sec:analysis}
%---------------------

We analyze the space complexity of the algorithm.  In Theorem
\ref{thm:correctness}, we showed that the Algorithms
\ref{algo:update} and \ref{algo:report} solve the Approximate CHH
detection problem, as long as constraints \ref{const:s1} and
\ref{const:s1s2} are satisfied. 

{\bf Space Complexity in terms of $s_1$ and $s_2$}. In
our algorithm, we maintain at most $s_2$ counters for each of the (at
most) $s_1$ distinct values of the primary attribute in $H$. Hence,
the size of our sketch is $O(s_1 + s_1s_2) = O(s_1s_2)$. 
We now focus on the following question. {\em What is the setting of
$s_1$ and $s_2$ so that the space complexity of the sketch is
minimized while meeting the constraints required for correctness.?}

%--------------------------------
\begin{lemma}
Let $\alpha= \left(\frac{1+{\phi}_2}{{\phi}_1-{\epsilon}_1}\right)$.
Subject to constraints \ref{const:s1} and \ref{const:s1s2}, the space
of the data structure is minimized by the following settings of $s_1$
and $s_2$. 

\begin{itemize}
\item
If $\epsilon_1 \ge \frac{\epsilon_2}{2\alpha}$, then 
$s_1 = \frac{2\alpha}{\epsilon}$ and 
$s_2 = \frac{2}{\epsilon_2}$.
In this case, the space complexity is $O\left(\frac{1}{(\phi_1-\epsilon_1)\epsilon_2^2}\right)$.

\item
If  $\epsilon_1 < \frac{\epsilon_2}{2\alpha}$, then
$s_1 = \frac{1}{\epsilon_1}$, and 
$s_2 = \frac{1}{\epsilon_2 - \alpha \epsilon_1}$.
In this case, the space complexity is $O(\frac{1}{\epsilon_1 \epsilon_2})$.
\end{itemize}
\end{lemma}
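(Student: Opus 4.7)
The plan is to set this up as a two-variable constrained optimization problem: minimize the objective $s_1 s_2$ subject to (i) $s_1 \ge 1/\epsilon_1$ from Constraint~\ref{const:s1} and (ii) $\frac{1}{s_2} + \frac{\alpha}{s_1} \le \epsilon_2$ from Constraint~\ref{const:s1s2}. Since the objective is increasing in $s_2$, the second constraint must be tight at any optimum, so we can solve it for $s_2$ as
\[
s_2 \;=\; \frac{s_1}{\epsilon_2 s_1 - \alpha},
\]
which is well-defined and positive provided $s_1 > \alpha/\epsilon_2$. Substituting reduces the problem to a one-variable optimization: minimize
\[
g(s_1) \;=\; \frac{s_1^{\,2}}{\epsilon_2 s_1 - \alpha}
\]
over $s_1 \ge \max\bigl(1/\epsilon_1,\ \alpha/\epsilon_2\bigr)$.

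Next I would differentiate $g$: a short calculation gives $g'(s_1) = \frac{s_1(\epsilon_2 s_1 - 2\alpha)}{(\epsilon_2 s_1 - \alpha)^2}$, so $g$ has a unique interior critical point at $s_1^{*} = 2\alpha/\epsilon_2$, it is strictly decreasing on $(\alpha/\epsilon_2, s_1^{*})$ and strictly increasing on $(s_1^{*}, \infty)$. Thus $s_1^{*}$ is the unconstrained minimizer. Now I case-split on whether $s_1^{*}$ satisfies Constraint~\ref{const:s1}: the condition $s_1^{*} \ge 1/\epsilon_1$ is exactly $\epsilon_1 \ge \epsilon_2/(2\alpha)$.

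In the first case ($\epsilon_1 \ge \epsilon_2/(2\alpha)$), the interior optimum is feasible, so set $s_1 = 2\alpha/\epsilon_2$ and then the tight version of Constraint~\ref{const:s1s2} yields $s_2 = \frac{1}{\epsilon_2 - \alpha/s_1} = 2/\epsilon_2$, giving product $s_1 s_2 = 4\alpha/\epsilon_2^{\,2} = O\!\left(\frac{1}{(\phi_1-\epsilon_1)\epsilon_2^{\,2}}\right)$ after absorbing the constant $1+\phi_2 \le 2$. In the second case ($\epsilon_1 < \epsilon_2/(2\alpha)$), we have $1/\epsilon_1 > s_1^{*}$, and since $g$ is increasing to the right of $s_1^{*}$, the constrained minimum is achieved at the boundary $s_1 = 1/\epsilon_1$; substituting into the tight constraint gives $s_2 = \frac{1}{\epsilon_2 - \alpha\epsilon_1}$, and the product is $\frac{1}{\epsilon_1(\epsilon_2 - \alpha\epsilon_1)}$. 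Using $\alpha\epsilon_1 < \epsilon_2/2$ in this case, we bound $\epsilon_2 - \alpha\epsilon_1 > \epsilon_2/2$, yielding $s_1 s_2 < 2/(\epsilon_1\epsilon_2) = O(1/(\epsilon_1\epsilon_2))$.

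The main obstacle is really just bookkeeping: one must verify that $s_1^{*}$ lies in the domain where the eliminated constraint stays tight (i.e., $s_1^{*} > \alpha/\epsilon_2$, which is immediate since $s_1^{*} = 2\alpha/\epsilon_2$), and that in the second case the feasibility condition $s_1 > \alpha/\epsilon_2$ is compatible with $s_1 = 1/\epsilon_1$ (which follows from $\epsilon_1 < \epsilon_2/(2\alpha) < \epsilon_2/\alpha$). Integrality of $s_1, s_2$ can be handled by taking ceilings, which affects only constants. Everything else is a direct calculus argument, so the structure of the proof is transparent once the problem is cast in this two-constraint form.
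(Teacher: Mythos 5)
Your proposal is correct and follows essentially the same route as the paper's proof: make Constraint~\ref{const:s1s2} tight, eliminate one variable, and split into cases according to whether the unconstrained critical point satisfies Constraint~\ref{const:s1}. The only difference is cosmetic --- the paper substitutes $\sigma_i = 1/s_i$ and maximizes the quadratic $\sigma_1(\epsilon_2 - \alpha\sigma_1)$, whereas you minimize the rational function $s_1^2/(\epsilon_2 s_1 - \alpha)$ directly; the case boundary, the optimal settings, and the resulting bounds are identical (and you even silently fix the paper's typo $s_1 = 2\alpha/\epsilon$, which should read $2\alpha/\epsilon_2$).
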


%--------------------------
\begin{proof}
Let $\sigma_1 = \frac{1}{s_1}$, $\sigma_2 = \frac{1}{s_2}$.
The problem is now to maximize $\sigma_1 \sigma_2$. 
Constraints \ref{const:s1} and \ref{const:s1s2} can be rewritten 
as follows.

\begin{itemize}
\item
{\bf Constraint 1:} $\sigma_1 \le \epsilon_1$
\item
{\bf Constraint 2:} $\alpha \sigma_1 + {\sigma}_2 \le {\epsilon}_2$
\end{itemize}

First, we note that any assignment $(\sigma_1, \sigma_2) = (x,y)$ that
maximizes $\sigma_1 \sigma_2$ must be tight on Constraint 2, i.e. 
$\alpha x + y = \epsilon_2$. This can be proved by
contradiction. Suppose not, and $\alpha x + y < \epsilon_2$, and $xy$
is the maximum possible. Now, 
there is a solution $\sigma_1 = x$, and $\sigma_2 = y'$, such that 
$y < y'$, and Constraints 1 and 2 are still satisfied. Further, 
$xy' > xy$, showing that the solution $(x,y)$ is not optimal.

Thus, we have:
\begin{equation}
\sigma_2 = \epsilon_2 - \alpha\sigma_1
\end{equation}

Thus the problem has reduced to:
{\bf Maximize $f(\sigma_1) = \sigma_1 \left(\epsilon_2 - \alpha\sigma_1\right)$ subject to 
$\sigma_1 \le \epsilon_1$.}

Consider
$$
f'(\sigma_1) = \epsilon_2 - 2\alpha \sigma_1
$$

We consider two cases.

\begin{itemize}
\item {\bf Case I:} $\epsilon_1 \ge \frac{\epsilon_2}{2\alpha}$.

Setting $f'(\sigma_1) = 0$, we find that the function reaches a fixed
point at $\sigma_1 = \frac{\epsilon_2}{2\alpha}$.
At this point, $f''(\sigma_1) = -2\alpha$, which is negative. Hence 
$f(\sigma_1)$ is maximized at $\sigma_1 = \frac{\epsilon_2}{2\alpha}$.
We note that this value of $\sigma_1$ does not violate Constraint 1,
and hence this is a feasible solution.
In this case, the optimal settings are: 
$\sigma_1 = \frac{\epsilon_2}{2\alpha}$ and 
$\sigma_2 = \frac{\epsilon_2}{2}$. Thus 
$s_1 = \frac{2\alpha}{\epsilon}$ and 
$s_2 = \frac{2}{\epsilon_2}$.
The space complexity is 
$O(\frac{1}{\sigma_1 \sigma_2}) = O(\frac{4\alpha}{\epsilon_2^2})$.

\item
{\bf Case II:} $\epsilon_1 < \frac{\epsilon_2}{2\alpha}$

The function $f(\sigma_1)$ is increasing for $\sigma_1$ from $0$ to 
$\frac{\epsilon_2}{2\alpha}$. Hence this will be maximized at the
point $\sigma_1 = \epsilon_1$. Thus, in this case the optimal settings
are $\sigma_1 = \epsilon_1$, and 
$\sigma_2 = \epsilon_2 - \alpha \epsilon_1$.
Thus, $s_1 = \frac{1}{\epsilon_1}$, and 
$s_2 = \frac{1}{\epsilon_2 - \alpha \epsilon_1}$.
The space complexity is: 
$O(\frac{1}{\epsilon_1 (\epsilon_2 - \alpha \epsilon_1)})$.

We note that since $\epsilon_2 > 2\alpha\epsilon_1$, 
we have $(\epsilon_2 - \alpha \epsilon_1) > \frac{\epsilon_2}{2}$, and hence
the space complexity is $O(\frac{1}{\epsilon_1 \epsilon_2})$.
\end{itemize}

\end{proof}
%---------------------------------

\begin{lemma} 
The time taken to update the sketch on receiving each element of the
stream is $O(\max(s_1,s_2))$.
\end{lemma}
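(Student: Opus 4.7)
The plan is to walk through each branch of Algorithm \ref{algo:update} and bound the work done. I will first state the data-structure assumption: each $H$ and each $H_d$ is stored as a hashtable (or height-balanced BST, as already mentioned for the Misra-Gries sketch in Section \ref{sec:algo}), so insertions, deletions, lookups, and counter increments/decrements on a named key take $O(1)$ time (or $O(\log)$, which is absorbed). In addition, I will maintain within each $H_d$ an auxiliary handle (for example, one bucket pointer, or the head of a linked list through the entries) so that picking an arbitrary $(s,\hat f_{d,s})$ with $\hat f_{d,s}>0$ in line \ref{line:arbitrary} can be done in $O(1)$.

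Next I would separate the two outer branches. In the ``$x \in H$'' branch, the only potentially expensive step is the overflow handler triggered when $|H_x|$ exceeds $s_2$: it walks every entry of $H_x$, decrementing and possibly deleting it. Since $|H_x|\le s_2+1$ at that moment, this costs $O(s_2)$; all other actions (the membership test on $x$, the increment of $\hat f_x$, the lookup of $y$ in $H_x$, the insertion of $(y,1)$) are $O(1)$, so this branch is $O(s_2)$ overall.

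In the ``$x \notin H$'' branch, the only potentially expensive step is the overflow handler triggered when $|H|$ exceeds $s_1$. It walks every $d \in H$ and performs $O(1)$ work per $d$: decrement $\hat f_d$; if some $\hat f_{d,s}>0$, locate such an $s$ in $O(1)$ via the auxiliary handle, decrement $\hat f_{d,s}$, and possibly discard $(s,\hat f_{d,s})$; if $\hat f_d$ reaches zero, discard the whole entry $(d,H_d)$ from $H$. Discarding $(d,H_d)$ looks like it could be expensive, since $H_d$ may contain up to $s_2$ items, but by Lemma~\ref{lemm:invariant} we have $\sum_{(s,\cdot)\in H_d}\hat f_{d,s}\le \hat f_d$, so whenever $\hat f_d$ is decremented from $1$ to $0$ we know $|H_d|\le 1$, and the deletion is $O(1)$. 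With $|H|\le s_1+1$, the total cost of this branch is $O(s_1)$.

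Combining the two cases, every update costs $O(\max(s_1,s_2))$, which is the claimed bound. The main subtlety I anticipate is the $O(1)$-per-iteration accounting inside the $|H|>s_1$ loop, specifically (i) selecting an arbitrary secondary key with positive count (handled by the auxiliary handle) and (ii) bounding the cost of discarding $(d,H_d)$ when $\hat f_d$ hits zero (handled by the invariant of Lemma~\ref{lemm:invariant}); once these are justified, the rest is a straightforward line-by-line accounting.
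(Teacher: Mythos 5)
Your proof takes essentially the same route as the paper's: a case analysis over the branches of Algorithm \ref{algo:update}, with the two overflow handlers costing $O(s_2)$ and $O(s_1)$ respectively and everything else $O(1)$ under a hashtable implementation. You are in fact somewhat more careful than the paper, which silently assumes $O(1)$ per iteration inside the $|H|>s_1$ loop; your two justifications there (the auxiliary handle for line \ref{line:arbitrary}, and the use of Lemma~\ref{lemm:invariant} to show $H_d$ is essentially empty when $\hat{f}_d$ hits zero) are correct and close real, if minor, gaps in the paper's accounting.
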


\begin{proof}
In processing an element
$(x,y)$ of the stream by Algorithm \ref{algo:update}, 
the following three scenarios may arise.
\begin{enumerate}
\item $x$ is present in $H$, and $y$ is present in $H_x$. We implemented the tables as hash tables,
hence the time taken to look up and increment ${\hat{f}}_x$ from $H$
and ${\hat{f}}_{x,y}$ from $H_x$ is O(1).
\item $x$ is present in $H$, but $y$ is not in $H_x$. If
the size of $H_x$ exceeds its space budget $s_2$, then, the time
taken to decrement the frequencies of all the stored values of the
secondary attribute is $O(s_2)$.
\item $x$ is not present in $H$. If the size of $H$ exceeds its space budget $s_1$,
then the time taken to decrement the frequencies of all the stored
values of the primary attribute is $O(s_1)$.
\end{enumerate}
The time complexity to update the sketch on receiving each element
is the maximum of these three, which establishes the claim.
\end{proof}

%---------------------
\section{Experiments}
\label{sec:simulation}
%---------------------
We simulated our algorithm for finding correlated heavy-hitters
in C++, using the APIs offered by the Standard Template Library
\cite{stl}, on three different datasets:
\begin{itemize}
\item {\bf IPPair:} An anonymized packet header trace collected by CAIDA
\cite{CAIDA} in both directions of an OC48 link. We used windump \cite{windump} in 
conjunction with a custom Java application to extract the source
IP address, the destination IP address, the source port number and
the destination port number from the .pcap files. Then, we took the comibation of (destination IP, source IP)
tuples to create this dataset. ``IPPair'' had 1.4 billion such tuples.
\item {\bf PortIP:} This is generated from the same trace as ``IPPair'', but it is a sample of 20.7 million (destination port,
destination IP) tuples.
\item {\bf NGram:} It is the ``English fiction'' 2-grams dataset based on the Google n-gram dataset \cite{ngram}. 
This is a collection of 2-grams extracted from books predominantly in the English language that a library or publisher identified as fiction. Some of the interesting trend analysis of 2-grams in English fiction can be found here \cite{trend}, e.g., the 2-gram ``child care'' started replacing the 2-gram ``nursery school'' in the mid-1970s. We took a uniform random sample of this dataset. We will refer to the two elements of a tuple as the ``first gram'' and the ``second gram'' respectively.
\end{itemize}

{\bf Objective: }The goal of the simulation was threefold: first, to learn about
typical frequency distributions along both the dimensions in real two-dimensional data streams; second, to illustrate the reduction in space and
time cost achievable by the small-space algorithm in practice; and
finally, to demonstrate how the space budget (and
hence, the allocated memory) influences the accuracy of our algorithm in practice.\\

For the {\em first} objective, we ran a naive algorithm on a smaller
sample of size 248 million taken from the ``IPPair'' dataset, where all
the distinct destination IPs were stored, and for each distinct
destination IP, all the distinct source IPs were stored. We
identified (exactly) the frequent values along both the dimensions
for ${\phi}_1 = 0.001$ and ${\phi}_2 = 0.001$. Only 43 of the 1.2 million
distinct destination IPs were reported as heavy-hitters. For the
secondary dimension, we ranked the heavy-hitter destination IPs
based on the number of distinct source IPs they co-occurred with,
and the number of distinct source IPs for the top eight are shown in
Figure \ref{fig:F_0_Second_Attribute}. All these heavy-hitter
destination IPs co-occurred with 9,000-18,000 {\em distinct} source
IPs, whereas, for all of them, the number of co-occurring {\em
heavy-hitter} source IPs was in the range 20-200 (note that the
Y-axis in Figure \ref{fig:F_0_Second_Attribute} is in log scale).
This shows that the distribution of the primary attribute values, as
well as that of the secondary attribute values for a given value of
the primary attribute, are very skewed, and
hence call for the design of small-space approximation algorithms like ours.

We did a similar exercise for the ``NGram'' dataset, and the result is in Figure \ref{fig:F_0_Second_Attribute_ngram}.
Once again, the values of ${\phi}_1$ and ${\phi}_2$ were both 0.001, and note that the number of distinct second
grams, co-occurring with the first grams, varies between 10 million and 100 million, but the number of CHH second grams
vary between 10 and 100 only, orders of magnitude lower than the number of distinct values of the second grams.

Since the ``NGram'' dataset is based on English fiction text, we observed some interesting patterns while working with the dataset: pairs of words that occur frequently together, as 
reported by this dataset, are indeed words whose co-occurrence intuitively look natural. We present some examples in Table \ref{tab:freq_word_pairs}, alongwith their frequencies:

\begin{table}[ht]
\caption{Pairs of words frequently occurring together}
\begin{tabular}{| l | l | l | l |}
\hline
{Gram1} & {Frequency of Gram1} & {Gram2} & {Frequency of Gram2 alongwith Gram1}\\
\hline
are & 1989774 & hardly & 4717\\
are & 1989774 & meant & 5031\\
still & 1601172 & remained & 4798\\
out & 1777906 & everything & 5497\\
was & 2373607 & present & 7932\\
was & 2373607 & deserted & 7641\\
look & 1226326 & outside & 2052\\
could & 1215055 & suggest & 5081\\
\hline
\end{tabular}
\label{tab:freq_word_pairs}
\end{table}

\begin{figure}[!ht]
\begin{center}
\includegraphics[width=3.5in]{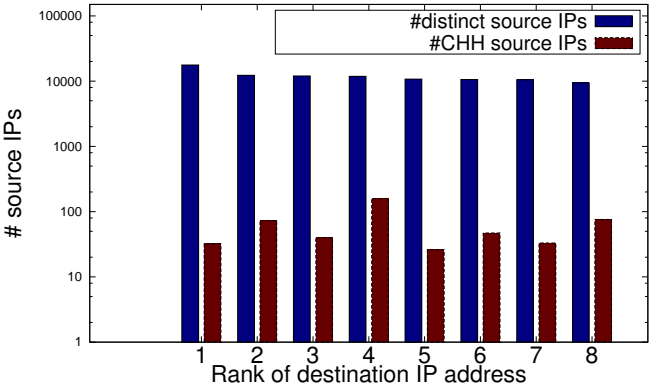}
\caption{\sf Basic statistics for a sample of ``IPPair''. On the X-axis are the ranks of the eight (heavy-hitter) destination IPs, that co-appear with maximum number of distinct
source IPs. For each destination IP, the Y-axis shows 1) the number
of distinct source IPs co-occurring with it, 2) the number of
heavy-hitter destination IPs co-appearing with it. Note that the Y-axis is logarithmic.}
\label{fig:F_0_Second_Attribute}
\end{center}
\end{figure}

\begin{figure}[!ht]
\begin{center}
\includegraphics[width=3.5in]{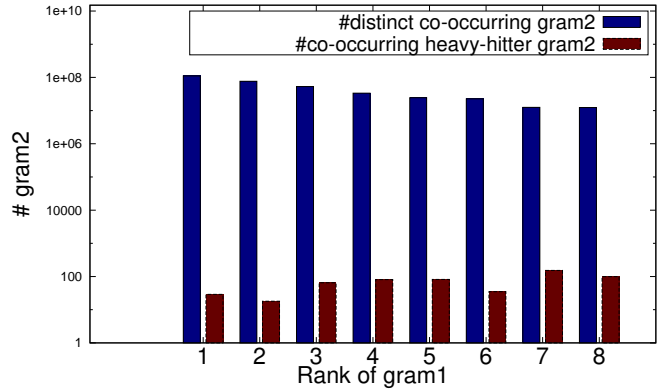}
\caption{\sf Basic statistics for ``NGram''. On the X-axis are the ranks of the eight (heavy-hitter) first gram values, that co-appear with maximum number of distinct second grams. For each first gram, the Y-axis shows 1) the number
of distinct second grams co-occurring with it, 2) the number of
heavy-hitter second grams co-appearing with it. Note that the Y-axis is logarithmic.}
\label{fig:F_0_Second_Attribute_ngram}
\end{center}
\end{figure}

The {\em second} objective was accomplished by comparing the space and
time costs of the naive algorithm as above (on the same sample of size 248 million 
taken from the ``IPPair'' dataset),
with those of the small-space algorithm, run with $s_1 = 3000$ and
$s_2 = 2000$ (Figure \ref{fig:spacetime}). We defined the space cost
as the distinct number of (dstIP, srcIP) tuples stored
$\left(\sum_{d}|H_d|\right)$, which is 34 times higher for the naive
algorithm compared to the small-space one. Also, the naive algorithm
took more than twice as much time to
run the small-space one.

\begin{figure}[!ht]
\begin{center}$
\begin{array}{cc}
\includegraphics[width=2.5in]{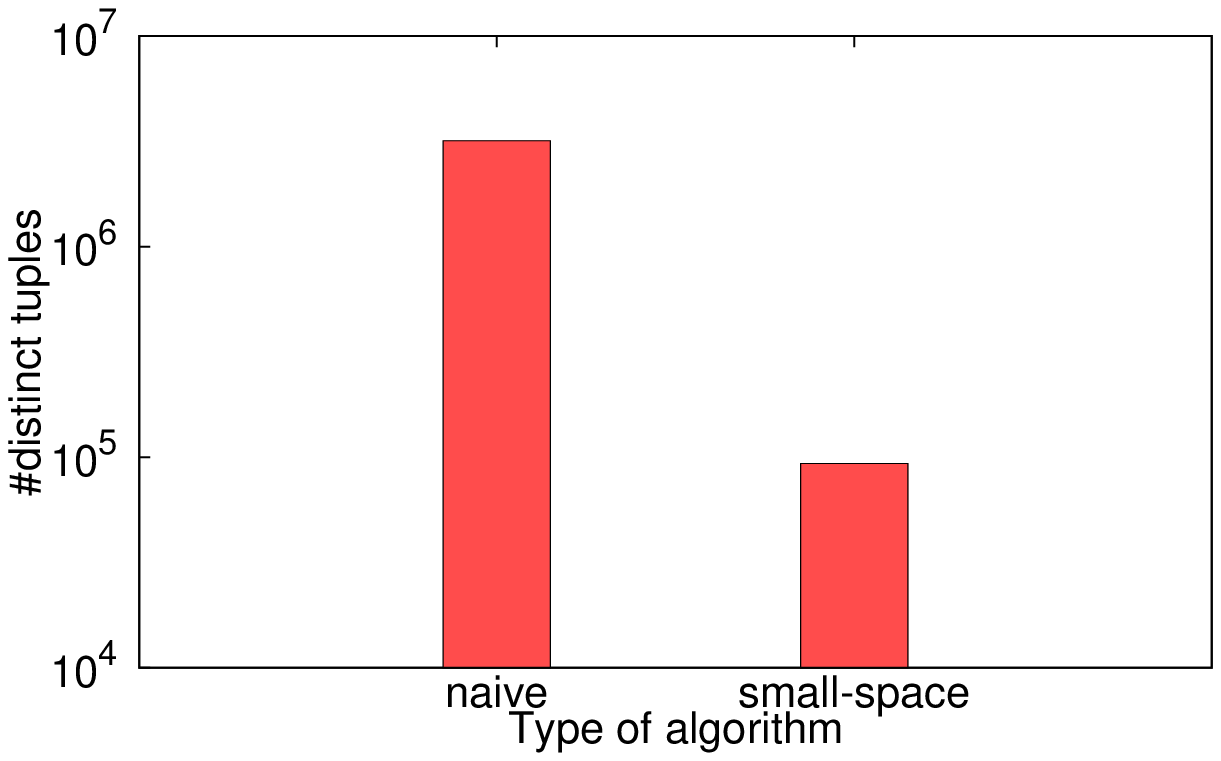}&
\includegraphics[width=2.5in]{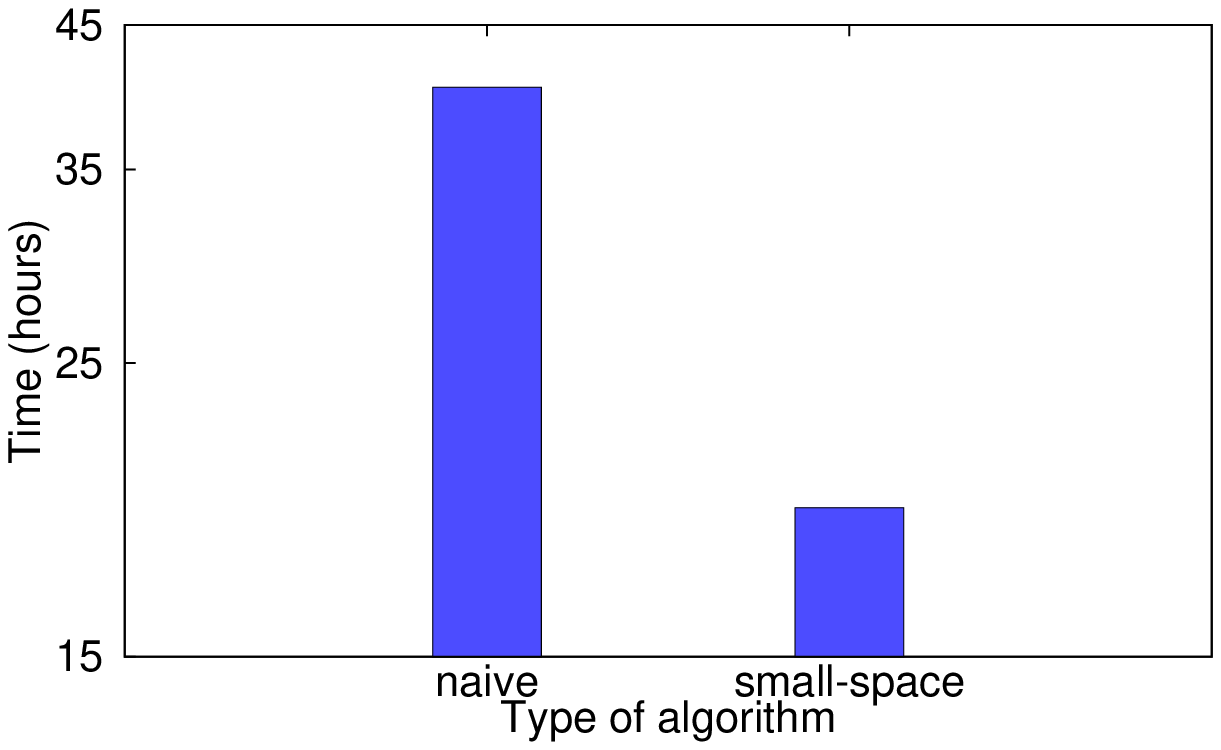}
\end{array}$
\caption{\sf Comparison of space (left) and time (right) costs of the naive and the small-space algorithms. The space is the total number of distinct tuples stored, summed over all distinct destination IP addresses. The time
is the number of hours to process the 248 million records. Note that
the Y-axis for the left graph is logarithmic.} \label{fig:spacetime}
\end{center}
\end{figure}

\begin{figure}[!ht]
\begin{center}
\includegraphics[width=3.5in]{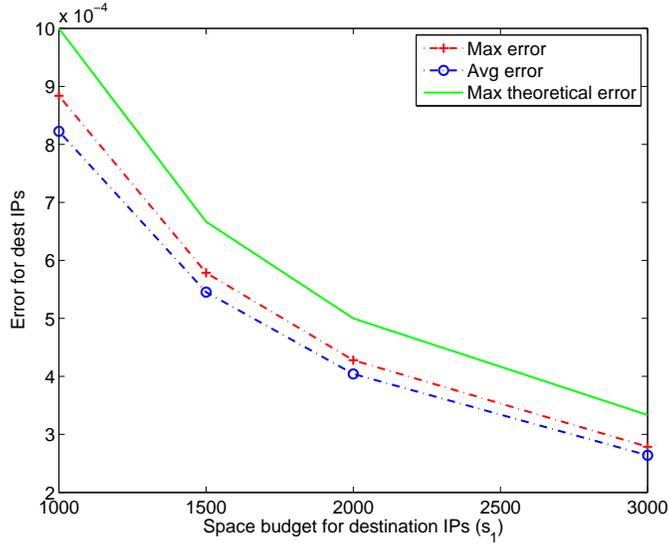}
\caption{{\sf Error statistic in estimating the frequencies of the
heavy-hitter destination IPs in ``IPPair''. The graph shows the theoretical
maximum ($\frac{1}{s_1}$), the experimental maximum and the
experimental average.}}
\label{fig:Error_First_Attribute}
\end{center}
\end{figure}

\begin{figure}[!ht]
\includegraphics[width=2.5in]{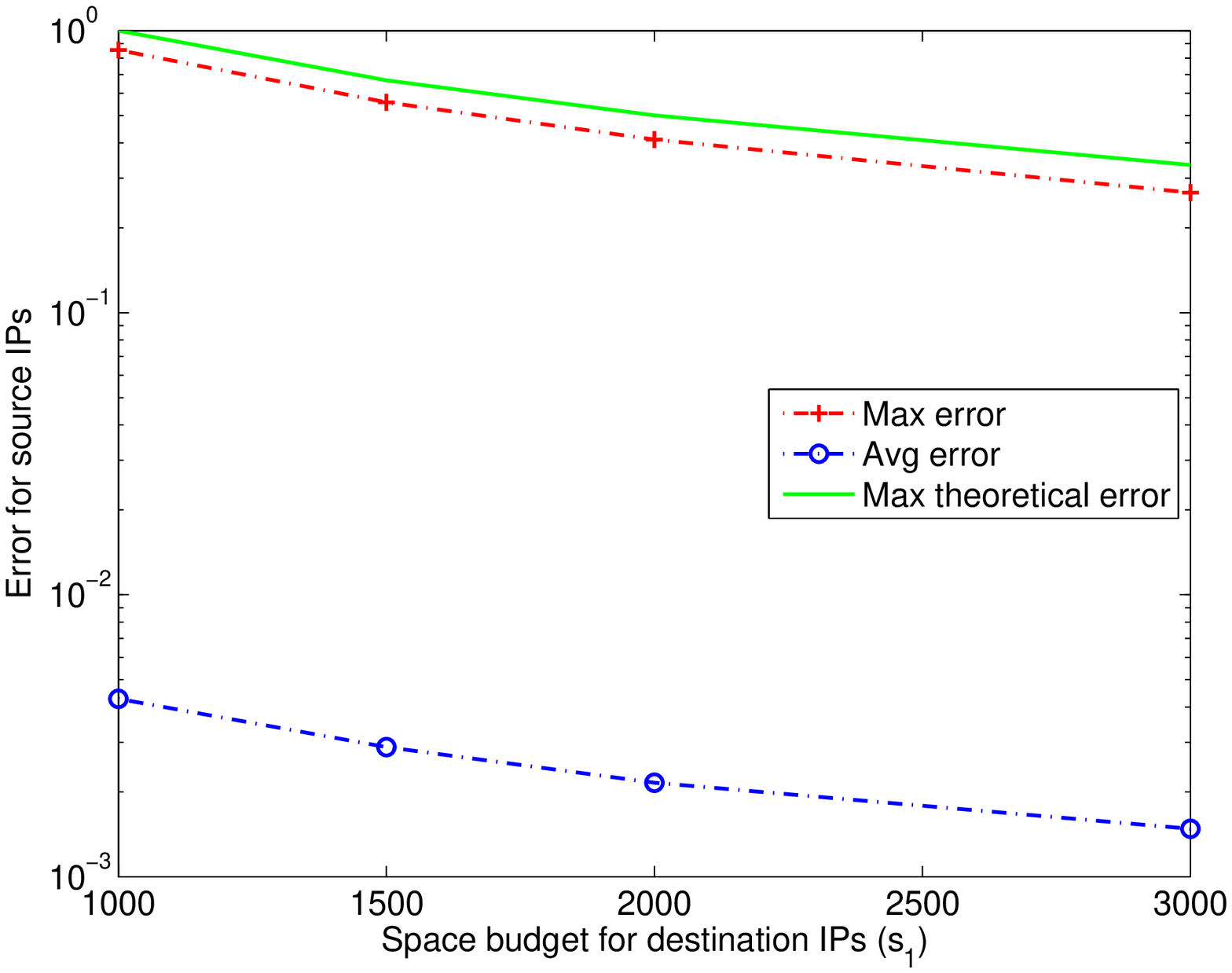}
\includegraphics[width=2.5in]{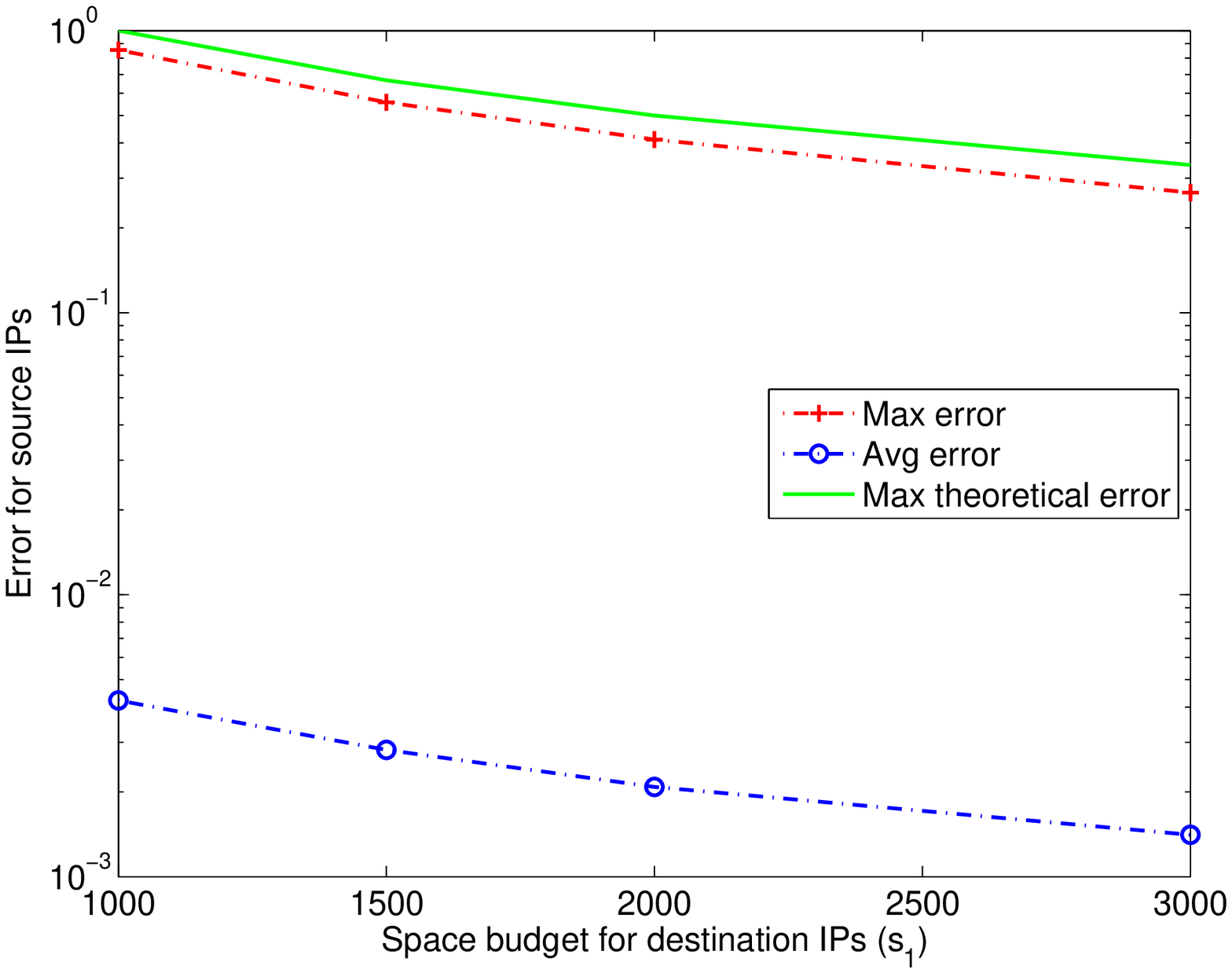}
\centering
\includegraphics[width=2.5in]{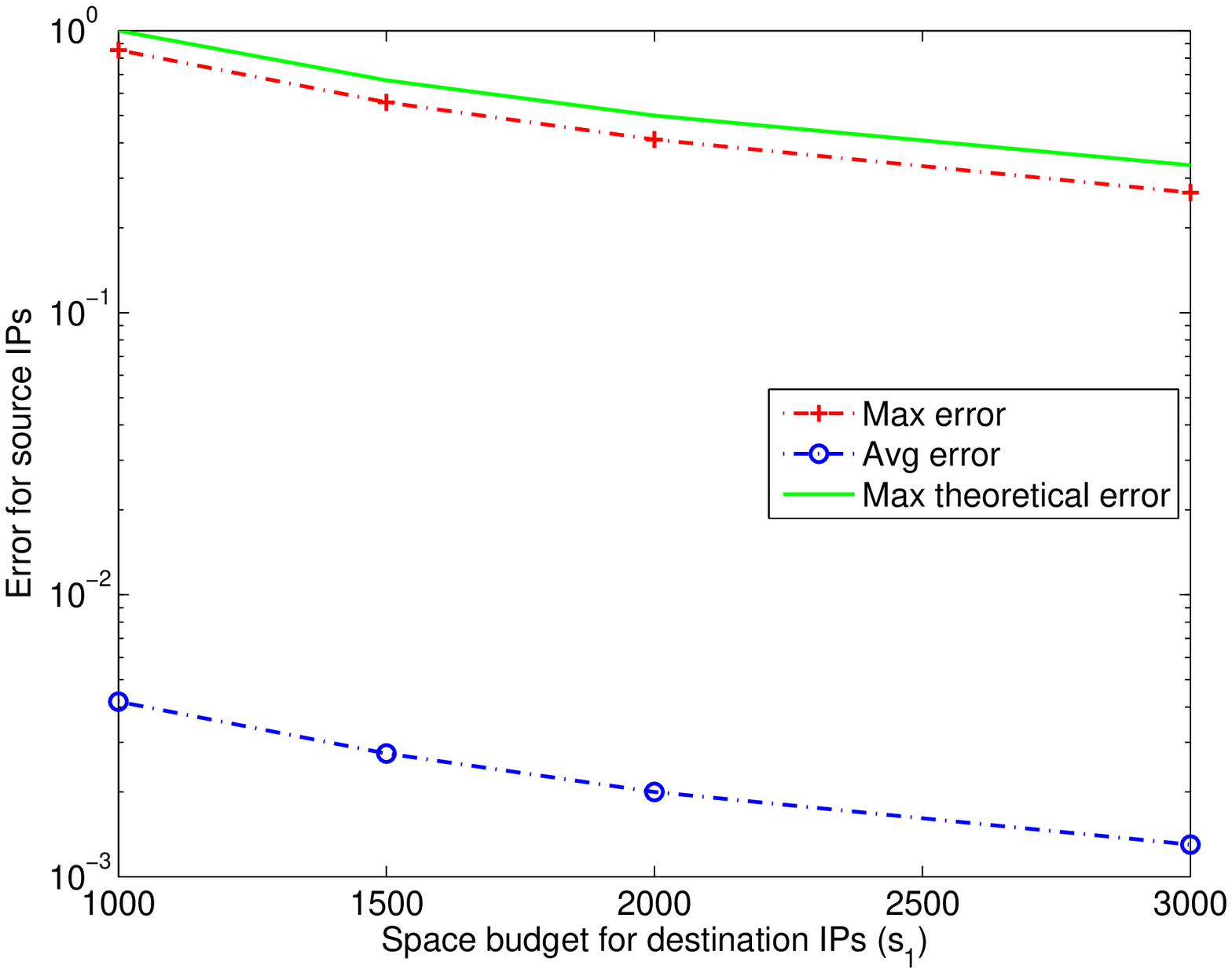}
\caption{{\sf Error statistic in estimating the frequencies of the
CHH source IPs in ``IPPair'', for $s_2$ = 1100, 1500 and 2000 respectively. The graph shows the
theoretical maximum $\left(\frac{1}{{\phi}s_1} + \frac{1}{s_2}\right)$, 
the experimental maximum and the experimental
average.}} \label{fig:Error_Second_Attribute}
\end{figure}

\begin{figure}[!ht]
\begin{center}
\includegraphics[width=3.5in]{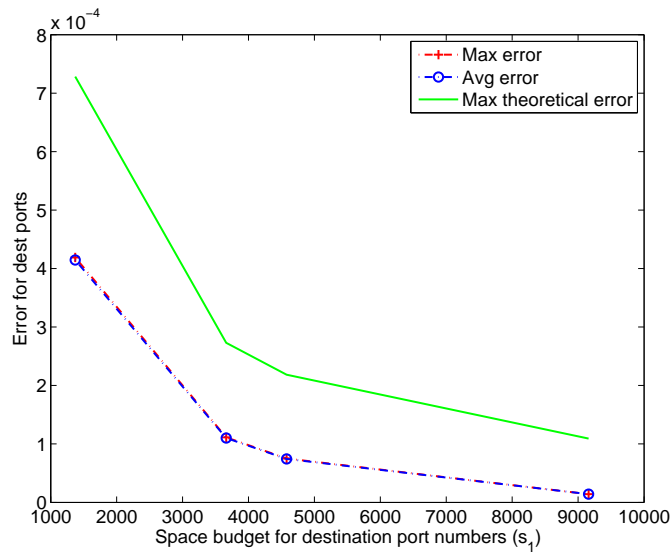}
\caption{{\sf Error statistic in estimating the frequencies of the
heavy-hitter destination ports from ``PortIP''}}
\label{fig:Error_First_Attribute_port}
\end{center}
\end{figure}

\begin{figure}[!ht]
\includegraphics[width=2.5in]{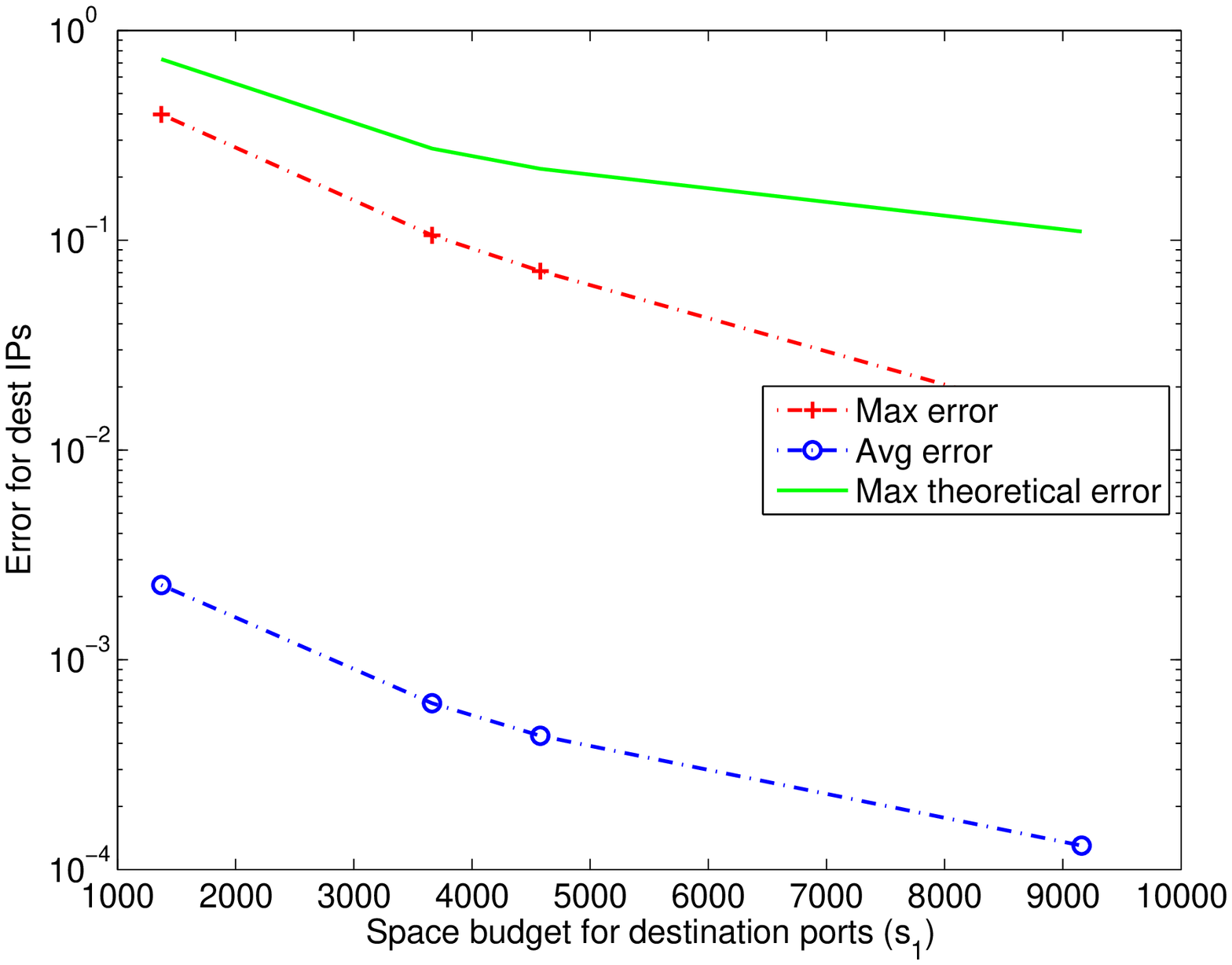}
\includegraphics[width=2.5in]{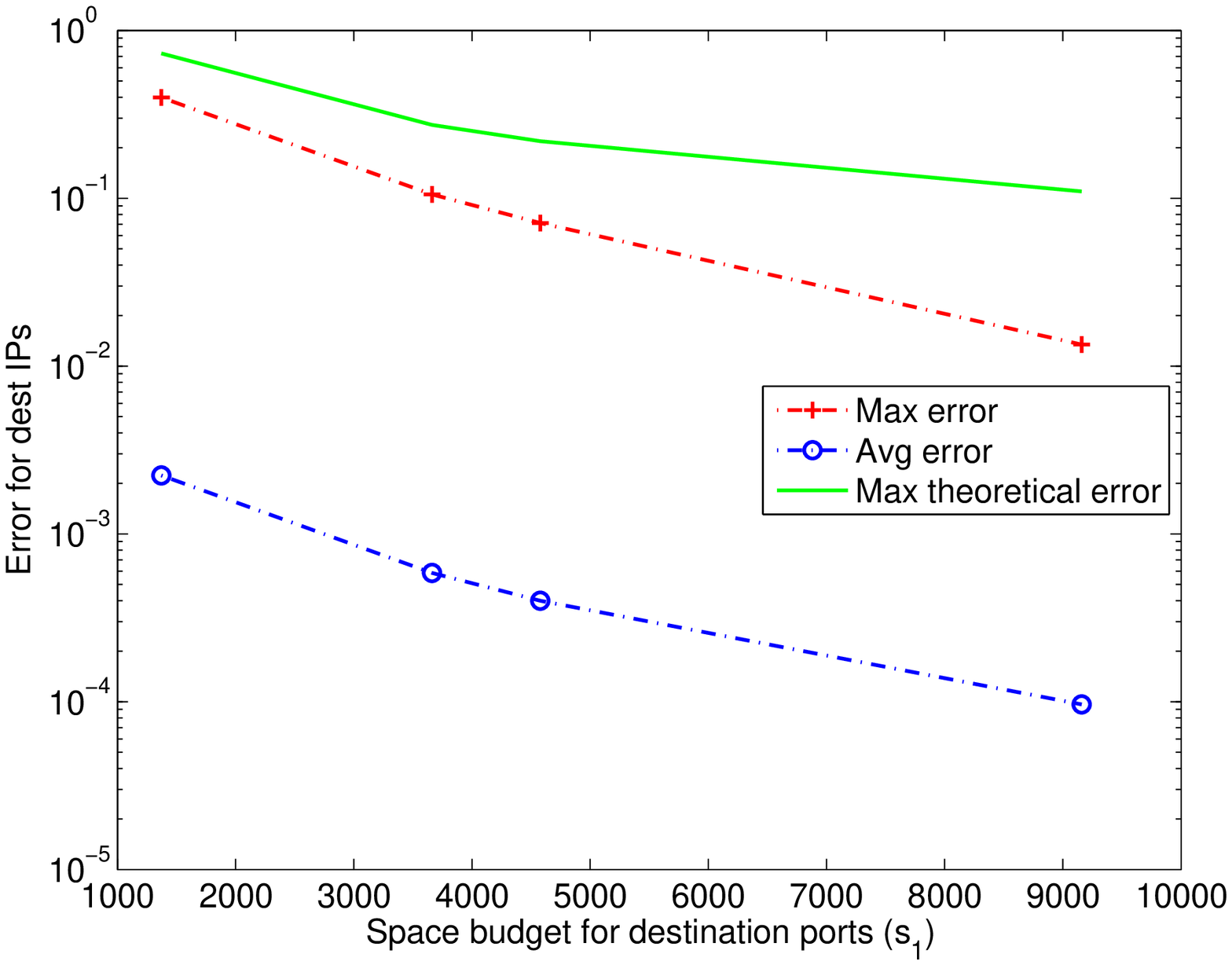}
\centering
\includegraphics[width=2.5in]{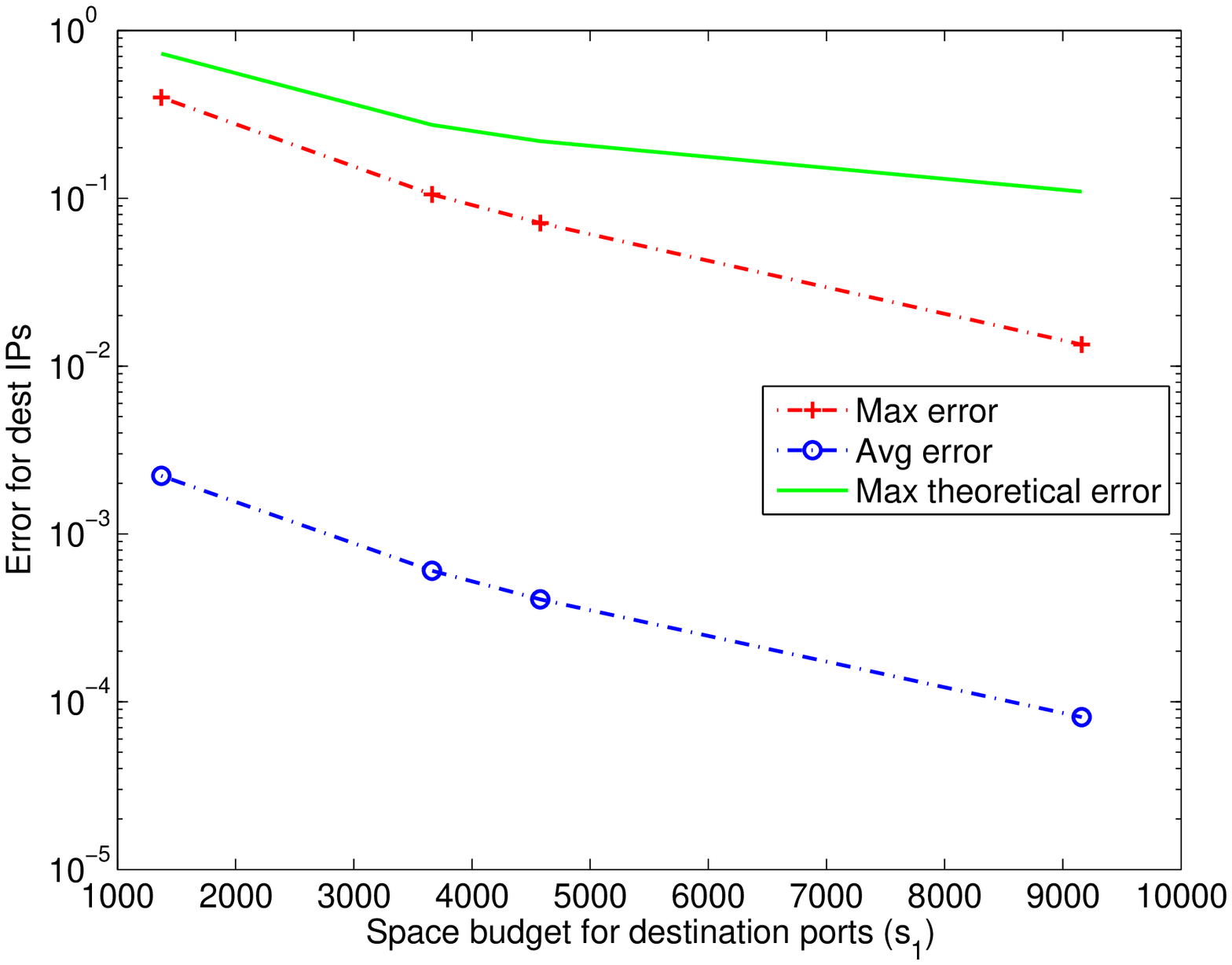}
\caption{{\sf Error statistic in estimating the frequencies of the
CHH destination IPs in ``PortIP''. The three graphs are for $s_2 =
1100$, $s_2 = 1500$ and $s_2 = 2000$ respectively.}}
\label{fig:Error_Second_Attribute_port}
\end{figure}

\begin{figure}[!ht]
\begin{center}
\includegraphics[width=3.5in]{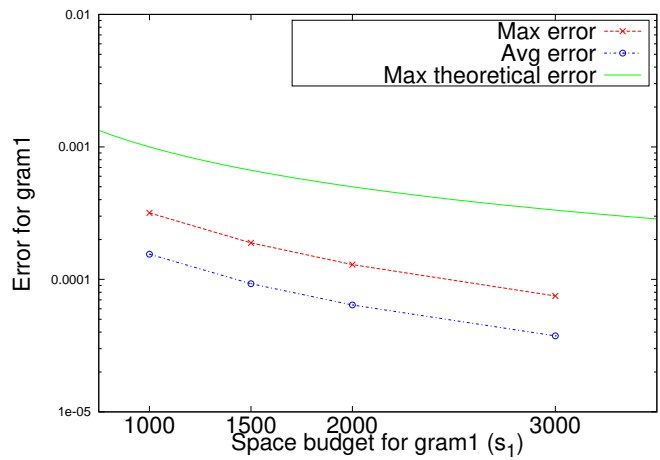}
\caption{{\sf Error statistic in estimating the frequencies of the
heavy-hitter first grams from ``NGram''}}
\label{fig:Error_First_Attribute_ngram}
\end{center}
\end{figure}

\begin{figure}[!ht]
\includegraphics[width=2.5in]{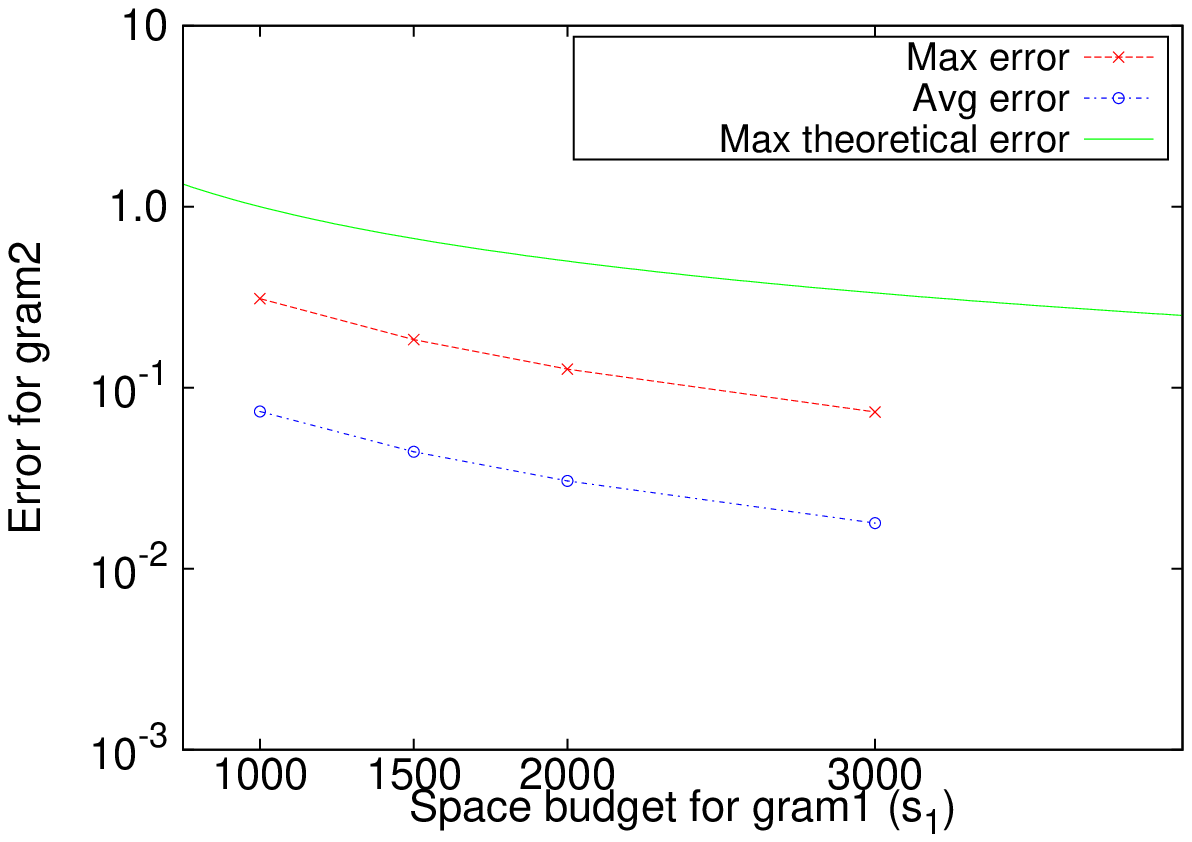}
\includegraphics[width=2.5in]{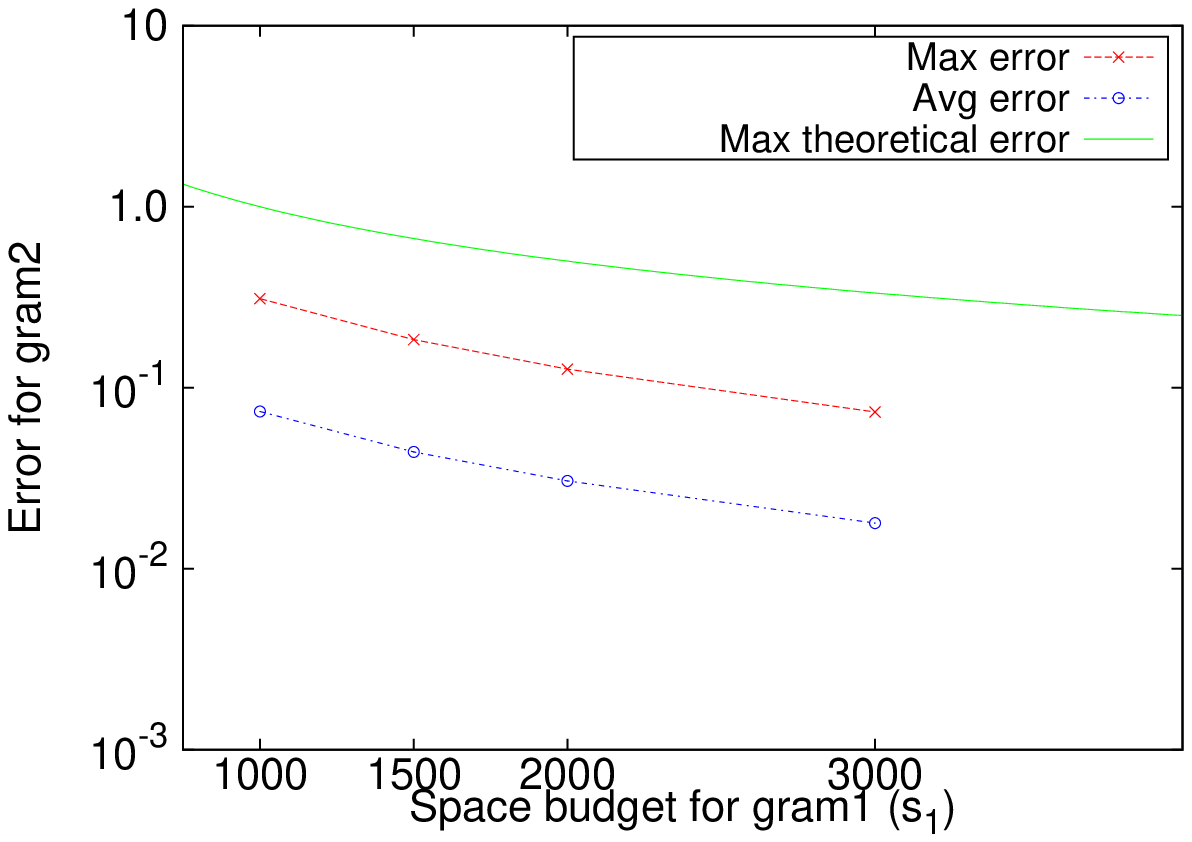}
\centering
\includegraphics[width=2.5in]{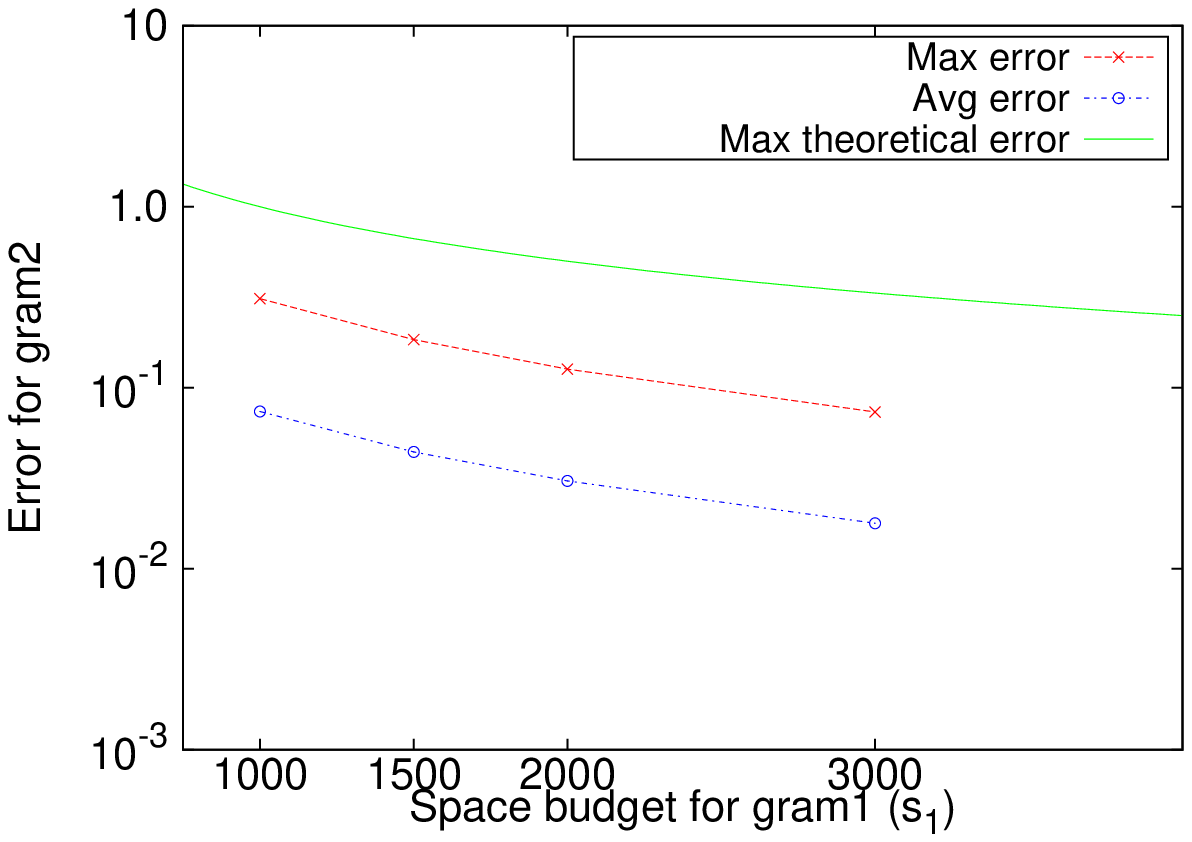}
\caption{{\sf Error statistic in estimating the frequencies of the
CHH second grams in ``NGram''. The three graphs are for $s_2 =
1100$, $s_2 = 1500$ and $s_2 = 2000$ respectively.}}
\label{fig:Error_Second_Attribute_ngram}
\end{figure}

For the {\em third} objective, we tested the small-space algorithm on all three datasets (with
different values of $s_1$ and $s_2$): ``IPPair'', ``PortIP'' and ``NGram''. To test the accuracy of our
small-space algorithm, we derived the ``ground truth'', i.e., a list
of the {\em actual} heavy-hitters along both the dimensions along
with their {\em exact} frequencies, by employing a four-pass
variant of the Misra-Gries algorithm (as discussed in Section \ref{sec:problem}).\\

{\bf Observations:} We define the error statistic in estimating the frequency of a
heavy-hitter value $d$ of the primary attribute as
$\frac{f_d-{\hat{f}}_d}{N}$, and in Figures
\ref{fig:Error_First_Attribute}, \ref{fig:Error_First_Attribute_port} and \ref{fig:Error_First_Attribute_ngram}, 
for each value of $s_1$, we plot
the maximum and the average of this error statistic over all the
heavy-hitter values of the primary attribute. We observed that both
the maximum and the average fell sharply as $s_1$ increased. Even by
using a space budget ($s_1$) as low as ~1000, the maximum error
statistic was only 0.09\% for ``IPPair'', 0.04\% for ``PortIP'' and 0.03\% for ``NGram''.\\

The graphs in Figures \ref{fig:Error_Second_Attribute}, \ref{fig:Error_Second_Attribute_port} and 
\ref{fig:Error_Second_Attribute_ngram} show the results of running our small-space algorithm with different values
of $s_1$ as well as $s_2$. We define the error statistic in
estimating the frequency of a CHH $s$ (that occurs
alongwith a heavy-hitter primary attribute $d$) as
$\frac{f_{d,s}-{\hat{f}}_{d,s}}{f_d}$, and for each combination of
$s_1$ and $s_2$, we plot the theoretical maximum, the experimental
maximum and the average of this error statistic over all CHH attributes. 
Here also, we observed that both the maximum and the average
fall sharply as $s_1$ increases. However, for a fixed value of
$s_1$, as we increased the value of $s_2$, the maximum did not
change at all (for either of three datasets), 
and the average did not reduce too much - this becomes evident if we compare the readings of the three sub-figures 
in Figures \ref{fig:Error_Second_Attribute}, \ref{fig:Error_Second_Attribute_port} and \ref{fig:Error_Second_Attribute_ngram}, which differ in their values of $s_2$, for identical values of $s_1$. The possible
reason is the number of CHHs being very low compared to
the number of distinct values of the secondary attribute occurring with a heavy-hitter
primary attribute, as we have pointed out in Figure
\ref{fig:F_0_Second_Attribute} for ``IPPair'' and Figure
\ref{fig:F_0_Second_Attribute_ngram} for ``NGram''. However, this is good because it
implies that in practice, setting $s_2$ as low as $\frac{1}{{\phi}_2}$
should be enough.

%---------------------
\section{Conclusion and Future Work}
\label{sec:conclusion}
%---------------------
For two-dimensional data streams, we presented a small-space
approximation algorithm to identify the heavy-hitters along the
secondary dimension from the substreams induced by the heavy-hitters
along the primary. We theoretically studied the relationship between
the maximum errors in the frequency estimates of the heavy-hitters
and the space budgets; computed the minimum space requirement along
the two dimensions for user-given error bounds; and tested our
algorithm to show the space-accuracy tradeoff for both the
dimensions.\\

Identifying the heavy-hitters along any one dimension allows us to
split the original stream into several important substreams; and
take a closer look at each one to identify the properties of the
heavy-hitters. In future, we plan to work on computing other
statistics of the heavy-hitters. For example, as we have already
discussed in Section \ref{sec:simulation}, our experiments with the
naive algorithm (on both the datasets) revealed that the number of
{\em distinct} secondary attribute values varied quite significantly
across the different (heavy-hitter) values of the primary attribute.
For any such data with high variance, estimating the variance in
small space \cite{BDMO03,ZG07} is an interesting problem in itself.
Moreover, for data with high variance, the simple arithmetic mean is
not an ideal central measure, so finding different quantiles, once
again in small space, can be another problem worth studying.

\bibliographystyle{plain}
\bibliography{references}

\end{document}